\def\boxit#1{\vbox{\hrule\hbox{\vrule\kern6pt\vbox{\kern6pt#1\kern6pt}\kern6pt\vrule}\hrule}}
\newenvironment{proof}[1][Proof]{\begin{trivlist}
\item[\hskip \labelsep {\bfseries #1}]}{\end{trivlist}}
\newtheorem{lemma}{{Lemma}}
\newtheorem{theorem}{{Theorem}}
\newtheorem{assumption}{{Assumption}}
\newtheorem{remark}{Remark}
\newtheorem{example}{{Example}}
\newtheorem{corollary}{{Corollary}}
\newcommand{\pkg}[1]{{\fontseries{b}\selectfont #1}}
\newcommand{\mbR}{\mathbb{R}}
\newcommand{\1}{{\rm 1}\kern-0.24em{\rm I}}
\newcommand{\mQ}{\mathcal{Q}}
\newcommand{\mJ}{\mathcal{J}}
\newcommand{\bA}{\boldsymbol{A}}
\newcommand{\btheta}{\boldsymbol{\theta}}
\newcommand{\bE}{\boldsymbol{E}}
\newcommand{\bZ}{\boldsymbol{Z}}
\newcommand{\bM}{\boldsymbol{M}}
\newcommand{\bX}{\boldsymbol{X}}
\newcommand{\bI}{\boldsymbol{I}}
\newcommand{\bSigma}{\boldsymbol{\Sigma}}
\newcommand{\bx}{\boldsymbol{x}}
\newcommand{\bw}{\boldsymbol{w}}
\newcommand{\bbf}{\boldsymbol{f}}
\newcommand{\bbg}{\boldsymbol{g}}
\newcommand{\bg}{\boldsymbol{g}}
\newcommand{\bD}{\boldsymbol{D}}
\newcommand{\bmu}{\boldsymbol{\mu}}
\newcommand{\bone}{\boldsymbol{1}}
\newcommand{\bzero}{\boldsymbol{0}}
\newcommand{\bbeta}{\boldsymbol{\beta}}
\newcommand{\bgamma}{\boldsymbol{\gamma}}
\title{Feature Augmentation via Nonparametrics and Selection (FANS) in High Dimensional Classification\thanks{Jianqing Fan is Frederick L. Moore Professor of Finance,
Department of Operations Research and Financial Engineering,
Princeton University, Princeton, NJ, 08544 (Email:
jqfan@princeton.edu). Yang Feng is Assistant Professor, Department of Statistics, Columbia University, New York, NY, 10027 (Email: yangfeng@stat.columbia.edu). Jiancheng Jiang is Associate Professor, Department of Mathematics and Statistics, University of North Carolina at Charlotte, Charlotte, NC, 28223 (Email: jjiang1@uncc.edu). Xin Tong is  Assistant Professor, Department of Data Sciences and Operations, University of Southern California, Los Angeles, CA, 90089 (Email: xint@marshall.usc.edu).
The financial support from
National Institutes of Health grants R01-GM072611 and R01GM100474-01 and National Science Foundation grants DMS-1206464 and DMS-1308566 is
greatly acknowledged. The authors thank the editor,
the associate editor, and referees for their constructive comments.}}
\author{Jianqing Fan, Yang Feng, Jiancheng Jiang and Xin Tong}
\date{}
\begin{document}
\maketitle

\begin{abstract}

We propose a high dimensional classification method that involves  nonparametric feature augmentation. Knowing that marginal density ratios are the most powerful univariate classifiers,  we use the ratio estimates to transform the original feature measurements. Subsequently, penalized logistic regression is invoked, taking  as input the newly transformed or augmented features. This procedure trains models equipped with local complexity and global simplicity, thereby avoiding the curse of dimensionality while creating a flexible nonlinear decision boundary.   The resulting method is called Feature Augmentation via Nonparametrics and Selection (FANS). We motivate FANS by generalizing the Naive Bayes model, writing the log ratio of joint densities as a linear combination of those of marginal densities.  It is related to generalized additive models, but has better interpretability and computability.  Risk bounds are developed for FANS.   In numerical analysis, FANS is compared with competing methods, so as to provide a guideline on its best application domain. Real data analysis demonstrates that FANS performs very competitively on benchmark email spam and gene expression data sets. Moreover, FANS is implemented by an extremely fast algorithm through parallel computing.

\end{abstract}\textbf{Keywords:}  density estimation, classification, high dimensional space, nonlinear decision boundary, feature augmentation, feature selection, parallel computing.

\pagenumbering{arabic} \setcounter{page}{1}

\newtheorem{prop}{\textbf{\sc{Proposition}}}
\newtheorem{ass}{\textbf{\sc{Assumption}}}
\newtheorem{theo}{\textbf{\sc{Theorem}}}
\newtheorem{lem}{\textbf{\sc{Lemma}}}
\newtheorem{defi}{\textbf{\sc{Definition}}}
\newtheorem{coro}{\textbf{\sc{Corollary}}}
\section{Introduction}\label{sec::introduction}
Classification aims to identify to which category a new observation belongs based on feature measurements. Numerous applications include spam detection, image recognition, and disease classification (using high-throughput data such as microarray gene expression and SNPs).  Well known classification methods include Fisher's linear discriminant analysis (LDA), logistic regression, Naive Bayes, $k$-nearest neighbors, neural networks, and many others. All these methods can perform well in the classical low dimensional settings, in which the number of features is much smaller than the sample size. However, in many contemporary applications, the number of features $p$ is large compared to the sample size $n$. For instance, the dimensionality $p$ in microarray data is frequently in thousands or beyond, while the sample size $n$ is typically in the order of tens. Besides computational issues, the central conflict in high dimensional setup is that the model complexity is not supported by limited access to data. In other words, the ``variance" of conventional models is high in such new settings, and even simple models such as LDA need to be regularized.  We refer to   \cite{Hastie.Tibshirani.ea.2009}  and \cite{Buhlmann.Geer.2011} for overviews of statistical challenges associated with high dimensionality.

In this paper, we propose a classification procedure \underline{FANS} (Feature Augmentation via Nonparametrics and Selection).  Before introducing the algorithm, we first detail its motivation. Suppose feature measurements and responses are coded by a pair of random variables $(\bX, Y)$, where $\bX\in\mathcal{X}\subset \mathbb{R}^p$ denotes the features and $Y\in \{0,1\}$ is the binary response. Recall that a classifier $h$ is a data-dependent mapping from the feature space to the labels. Classifiers are usually constructed to minimize the risk
$P(h(\bX)\neq Y)$.

Denote by $g$ and $f$ the class conditional densities respectively for class 0 and class 1, i.e., $(\bX|Y=0)\sim g$ and $(\bX|Y=1)\sim f$.
It can be shown that the Bayes rule is $\1(r(\bx)\geq 1/2)$, where $r(\bx)=E(Y|\bX=\bx)$. Let $\pi=P(Y=1)$, then
$$
r(\bx) = \frac{\pi f(\bx)}{\pi f(\bx)+(1-\pi)g(\bx)}\,.
$$
Assume for simplicity that $\pi=1/2$, then the oracle decision boundary is
$$\{\bx: f(\bx)/g(\bx)=1\}=\{\bx: \log f(\bx) - \log g(\bx)=0\}\,,$$
Denote by $g_1, \cdots, g_p$ the marginals of $g$, and by $f_1, \cdots, f_p$ those of $f$. Naive Bayes models assume that the conditional distributions of each feature given the class labels are independent, i.e.,
\begin{align}\label{eq::Naive Bayes}
  \log\frac{f(\bx)}{g(\bx)}=\sum_{j=1}^p\log\frac{f_j(x_j)}{g_j(x_j)}\,.
\end{align}
Naive Bayes is a simple approach, but it is useful in many high-dimensional settings. Taking a two class Gaussian model with a common covariance matrix,  \citet{Bickel.Levina.2004} showed that naively carrying out the Fisher's discriminant rule performs poorly due to diverging spectra. In addition, the authors argued that independence rule which ignores the covariance structure performs better than the Fisher's rule in some high-dimensional settings.
However,  correlation among features is usually an essential characteristic of data, and it can help classification under suitable models and with relative abundance of the sample. Examples in bioinformatics study can be found in \cite{Ackermann.Strimmer.2009}. Recently, \cite{Fan.Feng.ea.2011} showed that the independence assumption can lead to huge loss in classification power when correlation prevails, and proposed a Regularized Optimal Affine Discriminant (ROAD).  ROAD is a linear plug-in rule targeting directly on the classification error, and it takes advantages of the un-regularized pooled sample covariance matrix.

Relaxing the two-class Gaussian assumption in parametric Naive Bayes gives us a general Naive Bayes formulation such as \eqref{eq::Naive Bayes}.  However, this model also fails to capture the correlation, or dependence among features in general.  On the other hand, the marginal density ratios are the most powerful univariate classifiers and using them as features in  multivariate classifiers can yield very powerful procedures.
This consideration motivates us to ask the following question: are there advantages of combining these transformed features rather than untransformed feature? More precisely, we would like to learn a decision boundary from the following set
\begin{equation} \label{eq1}
\mathcal{D}=\left\{\bx:  \beta_0+\beta_1 \log \frac{f_1(x_1)}{ g_1(x_1)}+\cdots + \beta_p \log \frac{f_p(x_p)}{ g_p(x_p)} = 0, \beta_0, \cdots, \beta_p\in \mathbb{R}\right\}\,.
\end{equation}
(All coefficients are $1$ in the Naive Bayes model, so optimization is not necessary.)
For univariate problems, properly thresholding the marginal density ratio delivers the best classifier. 
In this sense, the marginal density ratios can be regarded as the best transforms of future measurements, and (\ref{eq1}) is an effort towards combining those most powerful univariate transforms to build more powerful classifiers.

This is in a similar spirit to the sure independence screening (SIS) in \cite{Fan.Lv.2008} where the best marginal predictors are used as probes for their utilities in the joint model. By combining these marginal density ratios and
optimizing over their coefficients $\beta_j$'s, we wish to build a good classifier that takes into account feature dependence.  Note that although our target boundary $\mathcal{D}$ is not linear in the original features, it is linear in the parameters $\beta_j$'s.  Therefore, any linear classifiers can be applied to the transformed variables.   For example, we can use logistic regression, one of the most popular linear classification rules. Other choices, such as SVM (linear kernel), are good alternatives, but we choose logistic regression for the rest of discussion.

\noindent Recall that logistic regression models the log odds by
\begin{align*}
  \log\frac{P(Y=1|\bX=\bx)}{P(Y=0|\bX=\bx)}=\beta_0+\sum_{j=1}^p\beta_jx_j\,,
\end{align*}
where the $\beta_j$'s are estimated by the maximum likelihood approach. We should note that  without explicitly modeling correlations,
logistic regression takes into account features' joint effects and levels a good linear combination of features as the decision boundary. Its performance is similar to LDA, but both models can only capture decision boundaries linear in original features. 

On the other hand, logistic regression might serve as a building block for the more flexible FANS algorithm. Concretely, if we know the marginal densities $f_j$ and $g_j$, and run logistic regression on the transformed features $\{\log (f_j(x_j)/g_j(x_j))\}$, we create a decision boundary nonlinear in the original features.  The use of these transformed features is easily interpretable:  one naturally combines the ``most powerful" univariate transforms (building blocks of univariate Bayes rules) $\{\log (f_j(x_j)/g_j(x_j))\}$ rather than the original measurements.
In special cases such as the two-class Gaussian model with a common covariance matrix, the transformed features are not different from the original ones. Some caution should be taken: if $f_j=g_j$ for some $j$, i.e., the marginal densities for feature $j$ are exactly the same, this feature will not have any contribution in classification.  Deletion like this might lose power, because features having no marginal contribution on their own might boost classification performance when they are used jointly with other features. In view of this defect, a variant of FANS augments the transformed  features with  the original ones.

Since marginal densities $f_j$ and $g_j$ are unknown, we need to first estimate them, and then run a penalized logistic regression (PLR) on the estimated transforms. Note that some regularization (e.g., penalization) is necessary to reduce model complexity in the high dimensional paradigm.  This two-step classification rule of feature augmentation via nonparametrics and  selection will be called \underline{FANS} for short. Precise algorithmic implementation of FANS is described in the next section.     Numerical results show that our new method excels in many scenarios, in particular when no linear decision boundary can separate the data well.

%

\begin{figure}[h!]\label{fig::toyex}
  \begin{center}
\caption{The median test  errors for Gaussian vs. mixture of Gaussian when the  training data size varies. Standard errors shown in the error bars.}
\includegraphics[scale=0.7]{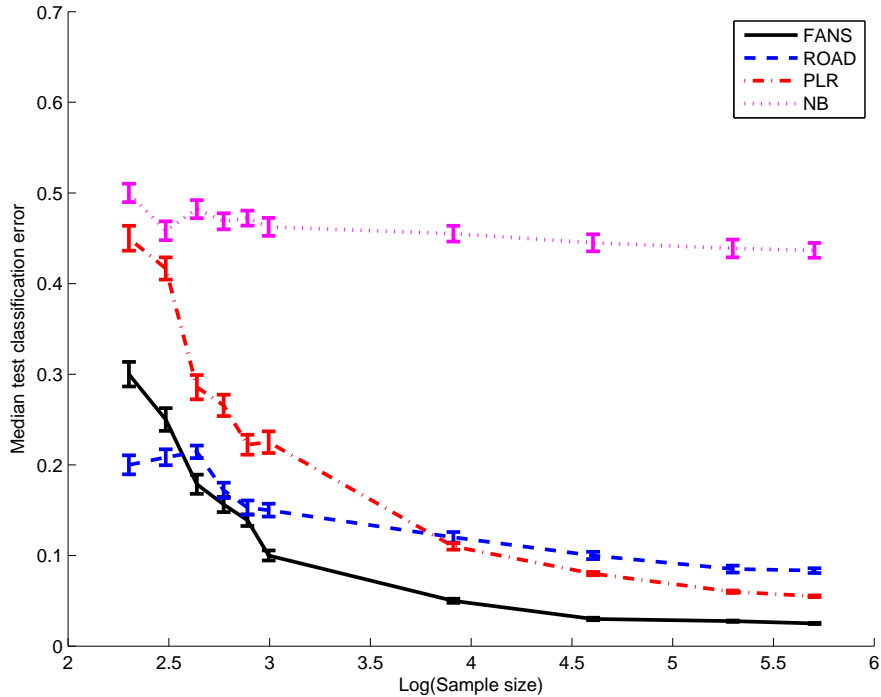}

\end{center}
\end{figure}

To understand where FANS stands compared to Naive Bayes (NB), penalized logistic regression (PLR), and  the regularized optimal affine discriminant (ROAD), we showcase a simple simulation example. In this example,  the choice is  between a multivariate Gaussian distribution and  some component－wise mixture of two multivariate Gaussian distributions:

Class 0: $ N((5\times\bone_{10}^T,\bzero_{p-10}^T)^T,\bSigma),$

Class 1: $\bw \circ N(\bzero_p,\bI_p)+(1-\bw)\circ N((6\times\bone_{10}^T,\bzero_{p-10}^T)^T,\bSigma), $
where $p=1000$, $\circ$ is the element-wise product between matrices, $\Sigma_{ii}=1$ for all $i=1, \cdots, p$, $\Sigma_{ij}=0.5$ for all $i,j=1, \cdots, p$  and $i\neq j$, and $\bw=(w_1,\cdots,w_p)^T$, in which $w_j\sim^{iid} \mbox{Bernoulli(0.5)}$.

The median classification error with the standard error shown in the error bar for 100 repetitions as a function of training sample size $n$ is rendered in  Figure \ref{fig::toyex}. This figure suggests  that increasing the sample size does not help NB  boost performance (in terms of the median classification error), because the NB model is severely biased in view of  significant correlation presence.  It is interesting to compare PLR with ROAD. ROAD is a more efficient approach when the sample size is small; however, PLR eventually performs better when the sample size becomes large enough. This is not surprising because the underlying true model is not two class Gaussian with a common covariance matrix. So the less ``biased" PLR beats ROAD on larger samples.  Nevertheless, even if ROAD uses a misspecified model, it still benefits from a specific model assumption on small samples.    Finally, since the oracle decision boundary in this example is nonlinear, the newly proposed FANS approach performs significantly better than others when the sample size is reasonably large. The above analysis seems to suggest that FANS does well  as long as we have enough data to construct accurate marginal density estimates.   Note also that ROAD is better than FANS when the training sample size is extremely small. Figure \ref{fig::toyex} shows that even under the same data distribution, the best method in practice largely depends on the available sample abundance.

A popular extension of logistic regression and close relative to FANS is the additive logistic regression, which belongs to the generalized additive model \citep{Hastie.Tibshirani.1990}. Additive logistic regression allows (smooth) nonparametric feature transformations to appear in the decision boundary by modeling
\begin{align}
  \log\frac{P(Y=1|\bX=\bx)}{P(Y=0|\bX=\bx)}=\sum_{j=1}^ph_j(x_j)\,,
\end{align}
where $h_j$'s are smooth functions. This kind of additive decision boundary is very general, in which FANS and logistic regression are special cases.  It works well for small-$p$-large-$n$ scenarios, while its penalized versions adapt to high dimensional settings.  We will compare FANS with penalized additive logistic regression in numerical studies.        The major drawback of additive logistic regression (generalized additive model) is the heavy computational complexity (e.g., the backfitting algorithm) involved in searching the transformation functions $h_j(\cdot)$. Moreover, the available algorithms, e.g., the algorithm for penGAM  \citep{Meier.Geer.ea.2009}, fail to give an estimate when  the sample size is very small.  Compared to FANS, the generalized additive model uses a factor of $K_n$ more parameters, where $K_n$ is the number of knots in the approximation of every additive components $\{h_j(\cdot)\}_{j=1}^p$.  While this reduces possible biases in comparison with FANS, it increases variances in the estimation and results in more computation cost (see Table~\ref{tb::time}).  Moreover, FANS admits a nice interpretation of optimal combination of optimal building blocks for  univariate classifiers.  


Besides the aforementioned references, there is a huge literature on high dimensional classification. 
Examples include principal component analysis in \citet{BairHast-2006} and \citet{ZouHastieTibs-2006}, partial least squares in \citet{Nguyen-2002}, \citet{HuangPan-2003} and \citet{Boulesteix-2004}, and sliced inverse regression in \citet{Li-1991-sliced-inverse-regression} and \citet{Antoniadis-2003}.  Recently, there has been a surge of interest for extending the linear discriminant analysis to high-dimensional settings including \cite{Guo.Hastie.ea.2007}, \cite{WuZhangetal2009}, \cite{Clemmensen.Hastie.ea.2011}, \cite{Shao.Wang.ea.2011}, \cite{Cai.Liu.2011}, \cite{Mai.Zou.ea.2012} and \cite{Witten.Tibshirani.2012}.

The rest of the paper is organized as follows. Section \ref{sec::algorithm} introduces the FANS algorithm.  Section \ref{sec::numerical} is dedicated to simulation studies and real data analysis. Theoretical results are presented in Section \ref{sec::excess-risk}.
We conclude with a discussion in Section \ref{sec::discussion}. Longer proofs and technical results are relegated to the Appendix.

\section{Algorithm}\label{sec::algorithm}

In this section, an efficient algorithm ($S1-S5$) for FANS will be introduced. We will also describe  a variant of FANS (FANS2), which uses the original features in addition to the transformed ones.
\subsection{FANS and its Running Time Bound}
\begin{enumerate}
\item [$S1$.] Given $n$ pairs of observations $D = \{(\bX_i,Y_i), i=1,\cdots,n\}$. Randomly split the data into two parts for $L$ times: $D_l= (D_{l1}, D_{l2})$, $l= 1, \cdots, L$. 

\item [$S2$.] On each  $D_{l1}$, $l\in\{1, \cdots, L\}$,  apply kernel density estimation and denote  the estimates by $\hat{ \bbf}=(\hat{f}_1,\cdots,\hat{f}_p)^T$ and $\hat \bbg =(\hat g_1,\cdots, \hat g_p)^T$.
\item [$S3$.] Calculate the transformed observations
$\hat{\bZ}_i=\bZ_{\hat \bbf,\hat\bbg}(\bX_i)$, where $\hat{Z}_{ij}=\log \hat{f}_j(X_{ij}) - \log \hat g_j(X_{ij})$, for each $i\in D_{l2}$ and $j\in\{1, \ldots, p\}$.
\item [$S4$.] Fit an $L_1$-penalized logistic regression to the transformed data $\{(\hat\bZ_i,Y_i), i\in D_{l2}\}$, using cross validation to get the best penalty parameter. For a new observation $
    \bx$, we estimate transformed features by $\log\hat{f}_j(x_j)-\log\hat g_j(x_j)$, $j=1, \ldots, p$, and plug them into the fitted logistic function to get the predicted probability $p_l$.
\item [$S5$.] Repeat ($S2$)-($S4$) for $l=1,\cdots,L$, use the average predicted probability $prob=L^{-1}\sum_{l=1}^L p_l$ as the final prediction,  and
    assign the observation $\bx$ to class 1 if $prob\geq 1/2$, and $0$ otherwise.

\end{enumerate}

A few comments on the technical implementation are made as follows.

\begin{remark}
\text{}
\begin{enumerate}

\item[i).] In $S2$, if an estimated marginal density is less than some threshold $\varepsilon$ (say $10^{-2}$), we set it to be $\varepsilon$. This Winsorization increases the stability of the transformations, because the estimated transformations $\log \hat{f}_j$ and $\log \hat g_j$ are unstable in regions where true densities are low.
\item[ii).]  In $S4$, we take penalized logistic regression, but any linear classifier can be used.  For example, support vector machine (SVM) with linear kernel is also a good choice.
\item[iii).] In $S4$, the $L_1$ penalty \citep{Tibshirani.1996} was adopted since our primary interest is the classification error. We can also apply other penalty functions, such as SCAD \citep{Fan.Li.2001}, adaptive LASSO \citep{Zou.2006} and MCP \citep{Zhang.2010}.
\item[iv).] In $S5$, the average predicted probability is taken as the final prediction. An alternative approach is to make a decision on each random split, and listen to majority vote.
\end{enumerate}
\end{remark}

In $S1$, we split the data multiple times.
The rationale behind multiple splitting lies in the two-step prototype nature of FANS, which  uses the first part of the data for marginal nonparametric density estimates (in $S2$) and (transformation of) the second part for penalized logistic regression (in $S4$).  Multiple splitting and prediction averaging not only make our procedure more robust against arbitrary assignments of data usage, but also make more efficient use of limited data. This idea is related to random forest \citep{breiman2001random}, where the final prediction is the average over results from multiple bootstrap samples. Other related literature includes  \cite{fu2005estimating} which considers estimation of misclassification error with small samples via bootstrap cross-validation. The number of splits is fixed at $L=20$ throughout all numerical studies. This choice reflects our cluster's node number. Interested readers can as well leverage their better computing resources for a larger $L$.  However, we observed that further increasing $L$ leads to similar performance for all simulation examples. Also, we recommend a balanced assignment by switching the role of data used for feature transformation and for feature selection, i.e.,   $D_{2l}=(D_{(2l-1), 2},D_{(2l-1), 1})$ when $D_{2l-1} = (D_{(2l-1), 1},D_{(2l-1), 2})$.


It is straightforward to derive a running time bound for our algorithm. Suppose splitting has been done.  In $S2$, we need to perform kernel density estimation for each variable, which costs $O(n^2 p)$\footnote{Approximate kernel density estimates can be computed faster, see e.g., \cite{Raykar.Duraisawami.Zhao.2010}.}. The transformations in $S3$ cost $O(np)$. In $S4$, we call the R package \pkg{glmnet} to implement penalized logistic regression, which employs  the coordinate decent algorithm for each penalty level. This step has a computational cost at most $O(npT)$, where $T$ is the number of penalty levels, i.e., the number of times the coordinate descent algorithm is run (see \cite{Friedman.Hastie.ea.2007} for a detailed analysis).    The default setting is $T=100$, though we can set it to other constants. Therefore, a running time bound for the whole algorithm is $O(L(n^2p+np+npT))=O(Lnp(n+T))$.

The above bound does not look particularly interesting. However, smart implementation of the FANS procedure can fully unleash the potential of our algorithm. Indeed, not only the $L$ repetitions, but also the marginal density estimates in $S2$ can be done via parallel computing. Suppose $L$ is the number of available nodes, and the cpu core number in each node is $N\geq n/T$. This assumption is reasonable because $T=100$ by default, $N=8$ for our implementation, and sample sizes $n$ for many applications are less than a multiple of $TN$.  Under this assumption, the $L$ predicted probabilities calculations can be carried out simultaneously and the results are combined later in $S5$.  Moreover in $S2$,  the running time bound becomes $O(n^2p/N)$.  Henceforth, a bound for the whole algorithm will be $O(npT)$, which is the same as that for penalized logistic regression. The exciting message here is that, by leveraging modern computer architecture, we are able to implement our nonparametric classification rule FANS within running time at the order of a parametric method.  The computation times for various simulation setups are reported in Table 2, where the first column reports results when only $L$ repetitions  are paralleled, and the second column reports the improvement when marginal density estimates in $S2$ are paralleled within each node.


\subsection{Augmenting Linear Features}
As we argued in the introduction, features with no marginal discrimination power do not make contribution in FANS. One remedy is to run (in $S4$) the penalized logistic regression using both the transformed features and  the original ones, which amounts to modeling the log odds by
$$
\beta_0 + \beta_1\log\frac{ f_1(x_1)}{ g_1(x_1)} + \ldots + \beta_p\log\frac{ f_p(x_p)}{ g_p(x_p)} + \beta_{p+1}x_1 + \ldots+\beta_{2p}x_p\,.
$$
This variant of FANS is named FANS2, and it allows features with no marginal power to enter the model in a linear fashion.  FANS2 helps when a linear decision boundary separates data reasonably well.

\section{Numerical Studies}\label{sec::numerical}
\subsection{Simulation}
In simulation studies,  FANS and FANS2 are compared with competing methods: penalized logistic regression (PLR, \cite{Friedman.Hastie.ea.2010}), penalized additive logistic regression models (penGAM, \cite{Meier.Geer.ea.2009}), support vector machine (SVM), regularized optimal affine discriminant (ROAD, \cite{Fan.Feng.ea.2011}), linear discriminant analysis (LDA), Naive Bayes (NB) and feature annealed independence rule (FAIR, \cite{Fan.Fan.2008}).

In all simulation settings, we set $p=1000$ and  training and testing data sample sizes of each class to be $300$. Five-fold cross-validation is conducted when needed, and we repeat 50 times for each setting (The relative small number of replications is due to the long computation time of penGAM, c.f. Table~\ref{tb::time}).
Table \ref{tb::simu-error} summarizes median test errors for each method along   with the corresponding standard errors. This table omits Fisher's classifier (using pseudo inverse for sample covariance matrix), because it gives a test error around $50\%$, equivalent to random guessing. 

\begin{example}
We consider the two class Gaussian settings where $\Sigma_{ii}=1$ for all $i=1,\cdots,p$ and $\Sigma_{ij}=\rho^{|i-j|}$,  $\bmu_1=\bzero_{1000}$ and $\bmu_2=(\bone_{10}^T,\bzero_{990}^T)^T$, in which $\bone_d$ is a length $d$ vector with all entries 1, and $\bzero_d$ is a length $d$ vector with all entries 0. Two different correlations $\rho=0$ and $\rho=0.5$ are investigated.
\end{example}

This is  the classical LDA setting. In view of the linear optimal decision boundary, the nonparametric transformations in FANS is not necessary. Table \ref{tb::simu-error} indicates some efficiency (not much) loss due to  the more complex model FANS. However, by including the original features,  FANS2 is comparable to the methods (e.g., PLR and ROAD) which learn boundaries linear in original features. In other words, the price to pay for using the more complex method FANS (FANS2) is small in terms of the classification error.

An interesting observation is that penGAM, which is based on a more general model class than FANS and FANS2, performs worse than our new methods. This is also expected as the complex parameter space considered by penGAM is unnecessary in view of a linear optimal decision boundary. Surprisingly, SVM performs poorly (even worse than NB), especially when all features are independent.

\begin{example}
The same settings as Example 1 except  the common covariance matrix is an equal correlation matrix, with a common correlation $\rho = 0.5$ and $\rho = 0.9$.
\end{example}

Same as in Example 1, FANS and FANS2 have performance comparable to PLR and ROAD.
Although FAIR works very well in Example 1, where the features are independent (or nearly independent), it fails badly when there is significant global pairwise correlation. Similar observations also hold for NB. This example shows that ignoring correlation among features could lead to significant loss of information and deterioration in the classification error.

\begin{example}
One class follows a multivariate Gaussian distribution, and the other a mixture of two multivariate Gaussian distributions. Precisely, \\
 Class 0: $ N((3\times\bone_{10}^T,\bzero_{p-10}^T)^T,\bSigma_p),$\\
 Class 1: $0.5\times N(\bzero_p,\bI_p)+0.5\times N((6\times\bone_{10}^T,\bzero_{p-10}^T)^T,\bSigma_p), $\\
where $\Sigma_{ii}=1$, $\Sigma_{ij}=\rho$ for $i\neq j$.
Correlations $\rho=0$ and $\rho=0.5$ are considered.

\end{example}

In this example,  Class 0 and Class 1 have the same mean, but have different marginal densities for the first 10 dimensions. Table \ref{tb::simu-error} shows that all methods based on linear boundary perform like random guessing, because the optimal decision boundary is highly nonlinear. penGAM is comparable to FANS and FANS2, but SVM cannot capture the oracle decision boundary well even if a nonlinear kernel is applied.

\begin{example}\label{ex::uniform-ball-square}
Two classes follow uniform distributions,\\
Class 0: $\mbox{Unif }(A)$,\\
Class 1: $\mbox{Unif }(B\backslash A)$,\\
where $A=\{\bx\in \mathbb{R}^p:\|\bx\|_2\leq 1\}$ and $B=[-1,1]^p$.
\end{example}
Clearly, the oracle decision boundary is $\{\bx\in \mathbb{R}^p:\|\bx\|_2= 1\}$. Again, FANS and FANS2 capture this simple boundary well while the linear-boundary based methods fail to do so.

Computation times (in seconds) for various classification algorithms are reported in Table \ref{tb::time}. FANS is extremely fast thanks to parallel computing. While penGAM performs similarly to FANS in the simulation examples, its computation cost is much higher. The similarity in performance is due to the abundance in training examples. We will demonstrate with an email spam classification example that penGAM fails to deliver satisfactory results on small samples.


\begin{table}[ht]
\caption{Median test error (in percentage) for the simulation examples. Standard errors are in the parentheses.\label{tb::simu-error}}
\begin{center}
\begin{tabular}{l|rrrrrrrr}
\hline
Ex($\rho$)&FANS&FANS2&ROAD&PLR&penGAM&NB&FAIR&SVM\\
\hline
 1(0)&6.8(1.1)&6.2(1.2)&6.0(1.3)&6.5(1.2)&6.6(1.1)&11.2(1.4)&5.7(1.0)&13.2(1.5)\\
 1(0.5)&16.5(1.7)&16.2(1.8)&16.5(5.3)&15.9(1.7)&16.9(1.6)&20.6(1.7)&17.2(1.6)&22.5(1.8)\\
 2(0.5)&4.2(0.9)&2.0(0.6)&2.0(0.6)&2.5(0.6)&3.7(0.9)&43.5(11.1)&25.3(1.6)&5.3(1.1)\\
 2(0.9)&3.1(1.1)&0.0(0.0)&0.0(0.0)&0.0(0.0)&0.2(1.4)&46.8(8.8)&30.2(1.9)&0.0(0.1)\\
3(0)&0.0(0.0)&0.0(0.0)&49.6(2.4)&50.0(1.3)&0.0(0.1)&50.4(2.2)&50.2(2.1)&31.8(2.4)\\
3(0.5)&3.4(0.7)&3.4(0.7)&49.3(2.4)&50.0(1.3)&3.7(0.8)&50.0(2.1)&50.2(2.0)&19.8(2.4)\\ 4&0.0(0.0)&0.0(0.0)&28.2(1.8)&50.0(10.7)&0.0(0.0)&41.0(1.1)&34.6(1.4)&0.0(0.0)\\
\hline
\end{tabular}
\end{center}
\end{table}

\begin{table}[ht]
\caption{Computation time (in seconds) comparison for FANS, SVM, ROAD and penGAM. The parallel computing technique is applied. Standard errors are in the parentheses.\label{tb::time}}
\begin{center}
\begin{tabular}{l|rrrrr}
\hline
Ex($\rho$)&FANS&FANS(para)&SVM&ROAD&penGAM\\
\hline
1(0)&12.0(2.6)&3.8(0.2)&59.4(12.8)&99.1(98.2)&243.7(151.8)\\
1(0.5)&12.7(2.1)&3.5(0.2)&81.3(19.2)&100.7(89.3)&325.8(194.3)\\
2(0.5)&16.0(3.1)&4.0(0.2)&77.6(18.1)&106.8(90.7)&978.0(685.7)\\
2(0.9)&22.0(4.6)&4.5(0.3)&75.7(17.8)&98.3(83.9)&3451.1(3040.2)\\
3(0)&12.1(2.1)&3.4(0.2)&152.1(27.4)&96.3(68.8)&254.6(130.0)\\
3(0.5)&11.9(2.0)&3.4(0.2)&342.1(58.0)&95.9(74.8)&298.7(167.4)\\
4&22.4(3.9)&6.6(0.4)&264.3(45.0)&75.1(54.0)&4811.9(3991.7)\\
\hline
\end{tabular}
\end{center}
\end{table}

\subsection{Real Data Analysis}
We study two real examples,  and compare FANS (FANS2) with competing methods.
\subsubsection{Email Spam Classification}
First, we investigate a benchmark email spam data set. This data set has been studied by \cite{Hastie.Tibshirani.ea.2009} among others to demonstrate the power of additive logistic regression models.  
There are a
total of $n=4,601$ observations with $p = 57$ numeric attributes. The attributes are, for instance, the percentage of specific words or characters in an email, the average and maximum run lengths of upper case
letters, and the total number of such letters. To show suitable application domains of FANS and FANS2, we vary the training proportion, from 5\%, 10\%, 20\%, $\cdots$,  to 80\% of the data while assigning the rest as  test set. Splits are repeated for 100 times and we report the median classification errors.

Figure \ref{fig::spam} and Table \ref{tb::spam} summarize the results. First, we notice that FANS and FANS2 are very competitive when training sample sizes are small. As the training sample size increases, SVM  becomes comparable to FANS2 and slightly better than FANS. In general, these three methods dominate throughout different training proportions.  The more complex model penGAM failed to yield classifiers when training data proportion is less than $30\%$ due to the difficulty of matrix inversion with the splines basis functions. For larger training samples, penGAM performs better than linear decision rules; however, it is not as competitive as either FANS or FANS2. Also interestingly, when the training sample size is 5\%, Naive Bayes (NB) performs as well as the sophisticated method FANS2 in terms of median classification error, but NB has a larger standard error. Moreover, the median classification error of NB remains almost unchanged when the sample size increases. In other words, NB's independence assumption allows good training given very few data points, but it cannot benefit from larger samples due to severe model bias.

\begin{figure}

  \begin{center}

\caption{The median test classification error for the spam data set using various proportions of the data as training sets for different classification methods.\label{fig::spam}}
\includegraphics[scale=0.8]{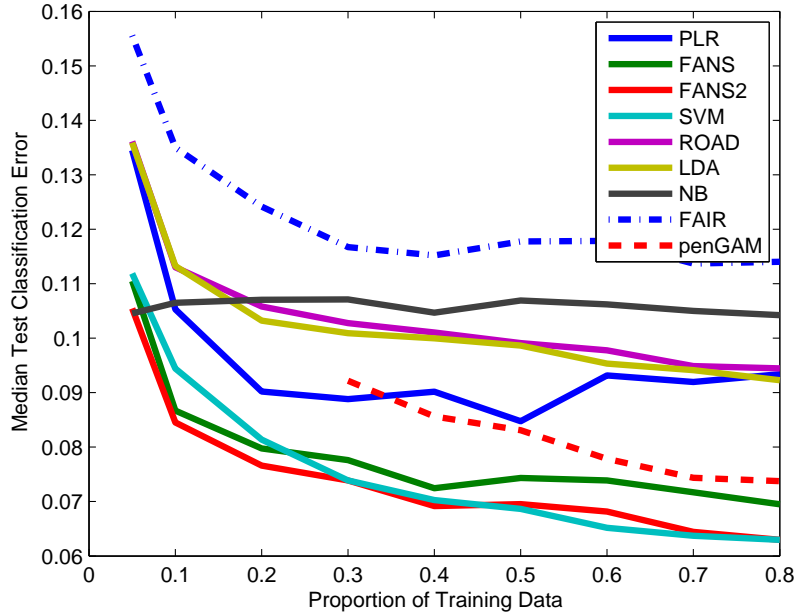}
  \end{center}
\end{figure}


\begin{table}[ht]
\caption{Median classification error (in percentage) on e-mail spam data when the size of the training data varies. Standard errors are in the parentheses.  \label{tb::spam}}

\begin{footnotesize}
\begin{center}
\begin{tabular}{l|rrrrrrrrr}
\hline
$\%$&FANS&FANS2&ROAD&PLR&penGAM&LDA&NB&FAIR&SVM\\
\hline
5&11.1(2.6)&10.5(1.1)&13.6(0.9)&13.5(1.7)&-&13.6(1.1)&10.5(5.0)&15.6(1.7)&11.2(0.8)\\
10&8.7(2.4)&8.5(0.9)&11.3(0.8)&10.5(1.1)&-&11.3(0.9)&10.7(4.2)&13.5(0.9)&9.4(0.7)\\
20&8.0(2.1)&7.7(0.7)&10.6(0.6)&9.0(0.8)&-&10.3(0.6)&10.7(5.3)&12.4(0.7)&8.1(0.7)\\
30&7.8(1.7)&7.4(0.5)&10.3(0.4)&8.9(0.6)&9.2(0.6)&10.1(0.5)&10.7(4.0)&11.7(0.4)&7.4(0.6)\\
40&7.2(2.2)&6.9(0.5)&10.1(0.5)&9.0(0.6)&8.6(0.5)&10.0(0.4)&10.5(5.1)&11.5(0.6)&7.0(0.5)\\
50&7.4(2.2)&7.0(0.5)&9.9(0.5)&8.5(0.6)&8.3(0.5)&9.9(0.4)&10.7(4.1)&11.8(0.6)&6.9(0.5)\\
60&7.4(2.2)&6.8(0.5)&9.8(0.6)&9.3(0.6)&7.8(0.6)&9.5(0.5)&10.6(4.8)&11.8(0.7)&6.5(0.6)\\
70&7.2(1.6)&6.4(0.6)&9.5(0.7)&9.2(0.7)&7.4(0.7)&9.4(0.6)&10.5(4.6)&11.4(0.7)&6.4(0.7)\\
80&6.9(1.6)&6.3(0.7)&9.4(0.6)&9.3(0.9)&7.4(0.8)&9.2(0.6)&10.4(4.7)&11.4(0.8)&6.3(0.9)\\
\hline
\end{tabular}
\end{center}

\end{footnotesize}

\end{table}

\subsubsection{Lung Cancer Classification}

We now evaluate the newly proposed classifiers on a popular gene expression data set  ``Lung Cancer" \citep{Gordon.Jensen.ea.2002}, which comes with predetermined,
separate training and test sets. It contains
$p = 12,533$ genes for $n_0 = 16$ adenocarcinoma (ADCA) and
$n_1 = 16$ mesothelioma training vectors, along with 134 ADCA
and 15 mesothelioma test vectors.

Following \cite{Dudoit.Fridlyand.ea.2002}, \citet{Fan.Fan.2008}, and \cite{Fan.Feng.ea.2011},  we standardized each sample to zero mean and unit variance. The classification results for FANS, FANS2, ROAD,  penGAM, NB, FAIR and SVM are summarized in Table \ref{tb-lung}.  FANS and FANS2 achieve 0 test classification error, while the other methods fail to do so.

\begin{table}
\caption{Classification error and number of selected genes on lung cancer data. \label{tb-lung}}
\centering
\begin{tabular}{l|llllllll}\hline
&FANS&FANS2&ROAD&PLR&penGAM&NB&FAIR&SVM\\\hline
Training Error & 0&0&1 & 0&0 &  6&  0&  0\\
  Testing Error & 0&0&1 &6& 2 & 36&  7 &4 \\
  No. of selected genes &52&52& 52 &15& 16 & 12533& 54 & 12533 \\
\hline
\end{tabular}
\end{table}

\section{Theoretical Results}\label{sec::excess-risk}
In this section, an oracle inequality  regarding the excess risk is derived for  FANS. 
Denote by $\bbf=(f_1,\cdots,f_p)^T$ and $\bbg=(g_1,\cdots, g_p)^T$  vectors of marginal densities of each class with $\bbf_0=(f_{0,1},\cdots,f_{0,p})^T$ and $\bg_0=(g_{0,1},\cdots,g_{0,p})^T$ being the true densities. Let $\{(\bX_i,Y_i)\}_{i=1}^n$ be i.i.d. copies of $(\bX,Y)$, and  the regression function be modeled by $$
P(Y_1=1|\bX_1)=\frac{1}{1+\exp(-m(\bZ_1))}\,,
$$
where $\bZ_1=(Z_{11}, \cdots, Z_{1p})^T$,  each $Z_{1j}=Z_{1j}(\bX_1)=\log f_j(X_{1j})-\log g_j(X_{1j})$, and $m(\cdot)$ is a generic function in some function class $\mathcal{M}$ that includes the linear functions.
Now, let
$\mQ=\{q=(m, \bbf, \bbg)\}$ be the parameter space of interest with constraints on $m$, $\bbf$ and $\bbg$ be specified later. The loss function we consider is
$$
\rho(q) = \rho(m, \bbf, \bbg)= \rho_q(\bX_1, Y_1)=-Y_1 m(\bZ_1)+\log(1+\exp[m(\bZ_1)])\,.
$$
Let $m_0=\arg\min_{m\in\mathcal{M}} P \rho(m,\bbf_0,\bbg_0)$. Then  the target parameter is $q^* = (m_0,\bbf_0,\bg_0)$.
We use a working model with $m_{\bbeta}(\bZ_1)=\bbeta^T\bZ_1$ to approximate $m_0$.
Under this working model, for a given parameter $q=(m_{\bbeta}, \bbf,\bbg)$,
let
\begin{align}\label{eq::logit-fans}
\pi_q(\bX_1)=P(Y_1=1|\bX_1)=\frac{1}{1+\exp(-\bbeta^T\bZ_1)}\,.
\end{align}
With this linear approximation, the loss function is the logistic  loss  $$\rho(q)= \rho_q(\bX_1, Y_1)=-Y_1\bbeta^T\bZ_1+\log(1+\exp[\bbeta^T\bZ_1])\,.$$
Denote the empirical loss by $P_n\rho(q) = \sum_{i=1}^n\rho_q(\bX_i, Y_i)/n$, and the expected loss by $P \rho(q)= E\rho_q(\bX, Y)$.
In the following, we take $\mathcal{M}$ as linear combinations of the transformed features so that  $m_0=m_{\bbeta_0}$,   where
\begin{align*}
\bbeta_0=\arg\min_{\bbeta\in\mathbb{R}^p}P\rho{(m_{\bbeta},\bbf_0,\bbg_0)}\,.
\end{align*}
In other words, $q_0=(m_{\bbeta_0},\bbf_0,\bbg_0)=q^*$.
Hence, the excess risk for a parameter $q$ is
\begin{equation} \label{eq4.4}
  \mathcal{E}(q)=P[\rho(q)-\rho(q^*)] = P[\rho(q)-\rho(q_0)]\,.
\end{equation}


As described in Section \ref{sec::algorithm}, densities $\bbf_0$ and $\bbg_0$ are unavailable and must be estimated.  Theorem \ref{thm::oracle-final} will establish the excess risk bound for the $L=1$ base procedure, which implies that the logistic regression coefficient and density estimates are close to the corresponding true values. Therefore, we expect for each $l = 1, \cdots, L$, the predicted probability $p_l$ is close to the oracle $\pi_{q^*}(\cdot)$. This further implies that $p = 1/L \sum_{l=1}^L p_l$ is close to $\pi_{q^*}(\cdot)$. Given the above analysis, we fix $L=1$ in the FANS algorithm (i.e., only one random splitting is conducted) throughout the theoretical development.

Suppose we have labeled samples $\{\bX^+_1, \cdots, \bX^+_{n_1}\}$ (used to learn $\bbf_0$) and $\{\bX^-_1, \cdots, \bX^-_{n_1}\}$ (used to learn $\bg_0$; theory carries over for different sample sizes), in addition to an i.i.d. sample $\{(\bX_1, Y_1), \cdots, (\bX_n, Y_n)\}$ (used to conduct penalized logistic regression). Moreover, suppose $\{(\bX_1, Y_1), \cdots, (\bX_n, Y_n)\}$ is  independent of $\{\bX^+_1, \cdots, \bX^+_{n_1}\}$ and $\{\bX^-_1, \cdots, \bX^-_{n_1}\}$.   A simple way to comprehend the above theoretical set up is that the sample size of $2n_1 + n$ has been split into three groups.  The notations $P$ and $E$ are regarding the random couple $(\bX, Y)$.
We use the notation $P^n$ to denote the probability measure induced by the sample $\{(\bX_1, Y_1), \cdots, (\bX_n, Y_n)\}$, and notations $P^{+}$ and  $P^{-}$ for the probability measures induced by the samples $\{\bX^+_1, \cdots, \bX^+_{n_1}\}$ and $\{\bX^-_1, \cdots, \bX^-_{n_1}\}$.

The density estimates $\hat\bbf=(\hat{f}_1, \cdots, \hat{f}_p)^T$ and $\hat\bbg=(\hat g_1, \cdots, \hat g_p)^T$ are based on samples $\{\bX^+_1, \cdots, \bX^+_{n_1}\}$ and $\{\bX^-_1, \cdots, \bX^-_{n_1}\}$:
$$
\hat{f}_j(x) = \frac{1}{n_1h}\sum_{i=1}^{n_1} K \left( \frac{X^+_{ij} - x}{h}  \right) \text{ and } \hat g_j(x) = \frac{1}{n_1h}\sum_{i=1}^{n_1} K \left( \frac{X^-_{ij} - x}{h}  \right) \text{ for } j = 1, \cdots, p\,,
$$
in which $K(\cdot)$ is a kernel function and $h$ is the bandwidth.
Then with these estimated marginal densities, we have an ``oracle estimate'' $q_1=(\bbeta_1, \hat\bbf,\hat\bbg)$,
where
\begin{eqnarray*}
\bbeta_1=\arg\min_{\bbeta\in \mbR^p} P \rho{(m_{\bbeta},\hat\bbf,\hat\bbg)}\,.
\end{eqnarray*}
It is the oracle given marginal density estimates $\hat\bbf$ and $\hat\bbg$, and is estimated in FANS by
\begin{eqnarray*}
\hat \bbeta_1=\arg\min_{\bbeta\in \mbR^p} P_n \rho{(m_{\bbeta},\hat\bbf,\hat\bbg)}+\lambda\|\bbeta\|_1\,.
\end{eqnarray*}
Let $\hat q_1 = (m_{\hat\bbeta_1}, \hat\bbf, \hat\bbg)$.
Our goal is to control the excess risk $\mathcal{E}(\hat q_1)$, where $\mathcal{E}$ is defined by \eqref{eq4.4}.
 In the following, we introduce technical conditions for this task.

Let $\bZ^0$ be the $n\times p$ design matrix consisting of transformed covariates based on the true densities $\bbf_0$ and $\bbg_0$.  That is $Z^0_{ij} = \log f_{0,j}(X_{ij})- \log g_{0,j}(X_{ij})$, for $i= 1, \cdots, n$ and $j = 1, \cdots, p$.  In addition, let $\bZ^0 = (\bZ_1^0, \bZ_2^0,\ldots, \bZ_n^0)^T$. Also, denote by $|S|$ the cardinality of the set $S$, and by $\|\bD\|_{\max} = \max_{ij} |D_{ij}|$ for any matrix $\bD$ with elements $D_{ij}$.

\begin{assumption}[Compatibility Condition]\label{assum::compatibility-condition}
The matrix $\bZ^0$ satisfies compatibility condition with a compatibility constant $\phi(\cdot)$, if for every subset $S\subset \{1,\cdots,p\}$, there exists a constant $\phi(S)$, such that for all $\bbeta\in \mbR^p$ that satisfy $\|\bbeta_{S^c}\|_1\leq 3\|\bbeta_S\|_1$, it holds that
\begin{eqnarray*}
  \|\bbeta_S\|_1^2\leq \frac{1}{n\phi^2(S)}\|\bZ^0\bbeta\|^2|S|\,.
\end{eqnarray*}
\end{assumption}
A direct application of Corollary 6.8 in \cite{Buhlmann.Geer.2011} leads to a compatibility condition on the estimated transform matrix $\hat \bZ$, in which $\hat Z_{ij} = \log \hat{f}_j(X_{ij})- \log \hat g_j(X_{ij}) $.\begin{lemma}\label{lem:compatibility-constant}
Denote by $\bE=\hat\bZ-\bZ^0$ the estimation error matrix of $\bZ^0$. If the compatibility condition is satisfied for $\bZ^0$ with a compatibility constant $\phi(\cdot)$, and the following inequalities hold
\begin{align}\label{eq::compatibility-constant}
\frac{32 \|\bE\|_{\max} |S|}{\phi(S)^2}\leq 1\,, \text{ for every } S\subset \{1, \cdots, p\},
\end{align}
the compatibility condition holds for $\hat\bZ$ with a new compatibility constant $\phi_1(\cdot)\geq \phi(\cdot)/\sqrt{2}$.
\end{lemma}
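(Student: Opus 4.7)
The plan is to relate the quadratic form $\|\hat\bZ\bbeta\|^2$ to $\|\bZ^0\bbeta\|^2$ through the perturbation matrix $\bE$, and then use the cone restriction together with the compatibility condition already available for $\bZ^0$ to absorb the resulting $\ell_1$--perturbation term back into $\|\bZ^0\bbeta\|^2$. Writing $\hat\bZ\bbeta=\bZ^0\bbeta+\bE\bbeta$ and applying the elementary inequality $\|a+b\|^2\ge \tfrac12\|a\|^2-\|b\|^2$ gives
\begin{equation*}
\|\hat\bZ\bbeta\|^2 \;\ge\; \tfrac12\|\bZ^0\bbeta\|^2 \;-\; \|\bE\bbeta\|^2,
\end{equation*}
and the row-wise $\ell_\infty$--$\ell_1$ duality yields the basic quadratic-form bound $\|\bE\bbeta\|^2\le n\|\bE\|_{\max}^2\|\bbeta\|_1^2$ (the more refined Gram-matrix perturbation bound used in Corollary~6.8 of \cite{Buhlmann.Geer.2011} produces the factor $32$ in hypothesis \eqref{eq::compatibility-constant}).

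Next, I would restrict attention to $\bbeta$ in the cone $\|\bbeta_{S^c}\|_1\le 3\|\bbeta_S\|_1$, so that $\|\bbeta\|_1\le 4\|\bbeta_S\|_1$ and hence $\|\bbeta\|_1^2\le 16\|\bbeta_S\|_1^2$. The original compatibility condition for $\bZ^0$ then provides
\begin{equation*}
\|\bbeta\|_1^2 \;\le\; \frac{16\,|S|}{n\,\phi(S)^2}\,\|\bZ^0\bbeta\|^2.
\end{equation*}
Substituting this back into the displayed lower bound gives
\begin{equation*}
\|\hat\bZ\bbeta\|^2 \;\ge\; \left(\frac{1}{2} \;-\; \frac{16\,\|\bE\|_{\max}\,|S|}{\phi(S)^2}\right)\|\bZ^0\bbeta\|^2,
\end{equation*}
and the hypothesis \eqref{eq::compatibility-constant} forces the coefficient to be at least $1/4$. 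In particular $\|\hat\bZ\bbeta\|^2\ge \tfrac12\|\bZ^0\bbeta\|^2$, so chaining with the compatibility condition for $\bZ^0$ yields
\begin{equation*}
\|\bbeta_S\|_1^2 \;\le\; \frac{|S|}{n\,\phi(S)^2}\,\|\bZ^0\bbeta\|^2 \;\le\; \frac{2\,|S|}{n\,\phi(S)^2}\,\|\hat\bZ\bbeta\|^2,
\end{equation*}
which is the compatibility statement for $\hat\bZ$ with $\phi_1(S)^2=\phi(S)^2/2$, i.e.\ $\phi_1(S)\ge \phi(S)/\sqrt 2$.

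The main obstacle here is really one of book-keeping rather than of ideas: the perturbation quantity that naturally arises from $\|\hat\bZ\bbeta\|^2-\|\bZ^0\bbeta\|^2$ is the max norm of the Gram-matrix difference $\bDelta=(\hat\bZ^T\hat\bZ-(\bZ^0)^T\bZ^0)/n$, and one has to verify that this is dominated by a constant multiple of $\|\bE\|_{\max}$ (using that rows of $\bZ^0$ are uniformly bounded under the working model) in order to land precisely on the constant $32$ in \eqref{eq::compatibility-constant}. Once the perturbation identity is written down and that max-norm bound is in hand, the cone inequality and the single substitution of the original compatibility condition finish the argument.
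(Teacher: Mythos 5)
The paper offers no written proof of this lemma at all --- it simply declares it ``a direct application of Corollary 6.8 in B\"uhlmann and van de Geer (2011)'' --- so your attempt stands or falls on its own. Your overall strategy (perturb the quadratic form, restrict to the cone, absorb the perturbation using the compatibility condition for $\bZ^0$) is the right one, but the opening decomposition cannot produce the stated constant. The inequality $\|a+b\|^2\ge\tfrac12\|a\|^2-\|b\|^2$ caps the coefficient of $\|\bZ^0\bbeta\|^2$ at $1/2$ \emph{before} any perturbation is subtracted, so you can only ever conclude $\|\hat\bZ\bbeta\|^2\ge c\,\|\bZ^0\bbeta\|^2$ with $c<1/2$; chaining with the compatibility condition for $\bZ^0$ then gives $\phi_1(S)^2\ge c\,\phi(S)^2$, which is at best $\phi_1\ge\phi/2$, not the claimed $\phi_1\ge\phi/\sqrt2$. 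Your closing step ``the coefficient is at least $1/4$\dots in particular $\|\hat\bZ\bbeta\|^2\ge\tfrac12\|\bZ^0\bbeta\|^2$'' is a non sequitur, and in fact even the $1/4$ does not follow: hypothesis \eqref{eq::compatibility-constant} gives $16\|\bE\|_{\max}|S|/\phi(S)^2\le 1/2$, so your displayed coefficient is only bounded below by $0$. The argument that actually delivers $\phi/\sqrt2$ is the one you gesture at in your last paragraph but do not carry out: bound the difference of quadratic forms directly via
$\bigl|\,\|\hat\bZ\bbeta\|^2-\|\bZ^0\bbeta\|^2\,\bigr|=\bigl|\bbeta^T\bigl(\hat\bZ^T\hat\bZ-(\bZ^0)^T\bZ^0\bigr)\bbeta\bigr|\le n\Delta\|\bbeta\|_1^2$ with $\Delta=\|\hat\bZ^T\hat\bZ/n-(\bZ^0)^T\bZ^0/n\|_{\max}$, which keeps the leading coefficient equal to $1$, and after the cone step yields $\|\hat\bZ\bbeta\|^2\ge\bigl(1-16\Delta|S|/\phi(S)^2\bigr)\|\bZ^0\bbeta\|^2\ge\tfrac12\|\bZ^0\bbeta\|^2$ precisely when $32\Delta|S|/\phi(S)^2\le1$.

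There are two further book-keeping problems. First, your own bound is $\|\bE\bbeta\|^2\le n\|\bE\|_{\max}^2\|\bbeta\|_1^2$, with the \emph{square} of $\|\bE\|_{\max}$, yet the displayed coefficient carries $\|\bE\|_{\max}$ to the first power; condition \eqref{eq::compatibility-constant} controls only the first power, so the substitution as written does not follow without an additional a priori bound such as $\|\bE\|_{\max}\le1$. Second, the hypothesis is phrased in terms of $\|\bE\|_{\max}$ while the quantity that naturally controls the quadratic form is the Gram difference $\Delta$; relating the two requires $\Delta\le\|\bE\|_{\max}\bigl(2\|\bZ^0\|_{\max}+\|\bE\|_{\max}\bigr)$ and hence the boundedness of $\bZ^0$ from Assumption \ref{assum::Z-max-beta-0-max} (or the normalization $\|Z_{1j}\|_\infty\le1$ invoked in Theorem \ref{th:1}). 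This is exactly where the constant $32$ has to be audited, and it is the step you explicitly leave unverified.
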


%
%

 The Compatibility Condition can be interpreted as a condition that bounds the restricted eigenvalues. The irrepresentable condition \citep{Zhao.Yu.2006} and the Sparse Riesz Condition (SRC) \citep{Zhang.Huang.2008} are in similar spirits. Essentially, these conditions avoid high correlation among subsets where signals are concentrated; such high correlation may cause difficulty in parameter estimation and risk prediction.

To help theoretical derivation, we introduce two intermediate $L_0$-penalized estimates. Given the true densities $\bbf_0$ and $\bg_0$, consider a penalized theoretical solution
$q_0^*=(\bbeta_0^*, \bbf_0,\bg_0)$, where
\begin{align}\label{eq::beta_0_star_general}
\bbeta_0^*=\arg\min_{\bbeta\in \mbR^p} 3P \rho{(m_{\bbeta},\bbf_0,\bg_0)}+2H\left(\frac{4\lambda\sqrt{s_{\bbeta}}}{\phi(S_{\bbeta})}\right)\,,
\end{align}
in which $H(\cdot)$ is a strictly convex function on $[0,\infty)$ with $H(0)=0$,  $s_{\bbeta}=|S_{\bbeta}|$ is the cardinality of $S_{\bbeta}=\{j: \beta_j\neq 0\}$, and $\phi(\cdot)$ is the compatibility constant for $\bZ^0$.
Throughout the paper, we consider a specific quadratic function\footnote{The following theoretical results can be derived for a generic strictly convex function $H(\cdot)$ along the same lines.}  $H(v)=v^2/(4c)$ whose convex conjugate is $G(u)=\sup_v\{uv-H(v)\}=cu^2$.
Then, equation \eqref{eq::beta_0_star_general} defines an $L_0$-penalized oracle:
\begin{align}\label{eq::beta_0_star}
\bbeta_0^*=\arg\min_{\bbeta\in \mbR^p} 3P \rho{(m_{\bbeta},\bbf_0,\bg_0)}+\frac{8\lambda^2 s_{\bbeta}}{c\phi^2(S_{\bbeta})}\,.
\end{align}
Similarly, with density estimate vectors $\hat\bbf$ and $\hat\bbg$, we define an $L_0$-penalized oracle estimate $q_1^*=(m_{\bbeta_1^*},\hat\bbf, \hat\bbg)$, where
\begin{align}\label{eq::beta_1_star}
\bbeta_1^*=\arg\min_{\bbeta\in \mbR^p} 3P \rho{(m_{\bbeta},\hat\bbf,\hat\bbg)}+\frac{8\lambda^2 s_{\bbeta}}{c\phi_1^2(S_{\bbeta})}\,.
\end{align}
To study the excess risk $\mathcal{E}(\hat q_1)$, we consider its relationship with $\mathcal{E}(q_1^*)$ and $\mathcal{E}(q_0^*)$.
%

\begin{assumption}[Uniform Margin Condition]\label{assum::uniform-margin-condition}
There exists $\eta>0$ such that for all $(m_{\bbeta},\bbf,\bbg)$ satisfying $\|\bbeta-\bbeta_0\|_{\infty} +\max_{1\leq j \leq p}\|f_j-f_{0,j}\|_{\infty}+\max_{1\leq j \leq p}\|g_j-g_{0,j}\|_{\infty}\leq 2 \eta$, we have
 \begin{align}
   \mathcal{E}(m_{\bbeta},\bbf,\bbg)\geq c\|\bbeta-\bbeta_0\|_2^2\,,
 \end{align}
where $c$ is the positive constant in  \eqref{eq::beta_0_star}.
\end{assumption}

The uniform margin condition is related to the one defined in \cite{Tsybakov.2004} and \cite{Geer.2008}. It is a type of ``identifiability" condition. Basically, near the target parameter $q_0=(m_{\bbeta_0},\bbf_0,\bbg_0)$, the functional value  needs to be sufficiently different from the value on $q_0$ to enable enough separability of parameters.  Note that we  impose the uniform margin condition in both the neighborhood of the parametric component $\bbeta_0$ and the nonparametric components $\bbf_0$ and $\bbg_0$, because we need to estimate the densities, in addition to the parametric part.  A related concept in binary classification is called ``Margin Assumption", which was first introduced in \cite{Polonik95} for densities.

To study the relationship between  $\mathcal{E}(\hat q_1)$ and $\mathcal{E}(q_1^*)$, we define
\begin{eqnarray*}
  v_n(\bbeta)=(P_n-P) \rho{(m_{\bbeta},\hat\bbf,\hat\bbg)}\, \mbox{ and } W_M=\sup_{\|\bbeta-\bbeta^*_1\|\leq M}|v_n(\bbeta)-v_n(\bbeta^*_1)|\,.
\end{eqnarray*}
Denote by
\begin{eqnarray*}
  2\epsilon^*=3\mathcal{E}(m_{\bbeta^*_1},\hat\bbf,\hat\bbg)+\frac{8\lambda^2s_{\bbeta^*_1}}{c\phi_1^2(S_{\bbeta^*_1})}\,.
\end{eqnarray*}
Set $M^*=\epsilon^*/\lambda_0$  ($\lambda_0$ to specified in Theorem 1)  and
\begin{eqnarray*}
  \mJ_1=\{W_{M^*}\leq \lambda_0 M^*\}=\{W_{M^*}\leq \epsilon^*\}\,.
\end{eqnarray*}
The idea here is to choose $\lambda_0$ such that the event $\mJ_1$ has high probability.

A few more notations are introduced to facilitate the discussion. Let $\tau>0$.  Denote by $\lfloor\tau\rfloor$ the largest integer strictly less than $\tau$.  For any $x, x'\in \mathbb{R}$ and any $\lfloor\tau\rfloor$ times continuously differentiable real valued function $u$ on $\mathbb{R}$, we denote by $u_x$ its Taylor polynomial of degree $\lfloor\tau\rfloor$ at point $x$:
$$
u_x(x') = \sum_{|s|\leq\lfloor \tau \rfloor}\frac{(x'-x)^s}{s!}D^s u(x)\,.
$$

For $L>0$,  the $(\tau, L, [-1,1])$-H\"{o}lder class of functions, denoted by $\Sigma(\tau, L, [-1,1])$, is the set of functions $u:\mathbb{R}\rightarrow \mathbb{R}$ that are $\lfloor\tau\rfloor$ times continuously differentiable and satisfy, for any $x,x'\in[-1,1]$, the inequality:
$$
|u(x')-u_x(x')|\leq L |x-x'|^{\tau}\,.
$$
The $(\tau, L, [-1,1])$-H\"{o}lder class of density is defined as
$$
\mathcal{P}_{\Sigma}(\tau, L, [-1,1])=\left\{p\,:\,p\geq0, \int p = 1, p\in\Sigma(\tau, L, [-1,1])\right\}\,.
$$
%

\begin{assumption}\label{assum::prob-regularity}
Assume that  $\bbeta_1$ is in the interior of  some  compact set $\mathcal{C}_p$. There exists an  $\epsilon_0\in(0,1)$ such that for all $\bbeta\in\mathcal{C}_p$ and $f_j, g_j\in\mathcal{P}_{\Sigma}(2, L, [-1, 1])$, $j = 1, \cdots, p$,
$\epsilon_0<\pi_{(m_{\bbeta},\bbf,\bbg)}(\cdot)<1-\epsilon_0$.

\end{assumption}
\begin{assumption}\label{assum::Z-max-beta-0-max}
 $\|\bZ^0\|_{\max} \leq K$ 
 for some absolute constant $K>0$, and
 $ \|\bbeta_{0}\|_\infty \leq C_1$ for some absolute constant $C_1>0$.

\end{assumption}%

\begin{assumption}\label{assum:para-constraint}

The penalty level $\lambda$ is in the range of $(8\lambda_0, L\lambda_0)$ for some $L>8$. Moreover, the following holds
 \begin{eqnarray*}
   \frac{8KL^2(e^{\eta}/\epsilon_0+1)^2}{\eta}\frac{\lambda_0s_{\bbeta^*_1}}{\phi_1^2(S_{\bbeta^*_1})}\leq 1,
 \end{eqnarray*}
 where $\eta$ is as in the uniform margin condition.
\end{assumption}

Assumption \ref{assum::prob-regularity} is a regularity condition on the probability of the event that the observation belongs to class 1.  Since the FANS estimator is based on the estimated densities, we impose the constraints in a neighborhood of the oracle estimate $\bbeta_1$ (when using $\hat\bbf$ and $\hat\bbg$).     
Assumption \ref{assum::Z-max-beta-0-max} bounds the maximum absolute entry of the design matrix as well as the maximum absolute true regression coefficient. Assumption \ref{assum:para-constraint} posits a proper range of the penalty parameter $\lambda$ to guarantee that the penalized estimator mimics the un-penalized oracle.

\begin{assumption}\label{assum::holder-class}
Suppose the feature measurement $\bX$ has a compact support $[-1,1]^p$, and $f_{0,j}, g_{0,j}\in\mathcal{P}_{\Sigma}(2, L, [-1,1])$ for all $j= 1, \cdots, p$, where $\mathcal{P}_{\Sigma}$ denotes a H\"{o}lder class of densities.

\end{assumption}

\begin{assumption}\label{assum::density-truncation}
Suppose there exists $\epsilon_l>0$ such that for all $j= 1, \cdots, p$,   $\epsilon_l \leq f_{0,j}, g_{0,j}\leq \epsilon_l^{-1}$. Also we truncate estimates $\hat{f}_j$ and $\hat g_j$ at $\epsilon_l$ and $\epsilon_l^{-1}$.
\end{assumption}

\begin{assumption}\label{assumption:vanishing}
$$
n_1^{\frac{7}{20}-\frac{3}{4}\alpha}(\log(3p))^{\frac{3}{4}}(\log n_1)^\frac{1}{10}=o(1)\,,
$$
and,
$$
n_1^{\frac{1}{10}-\alpha}(\log(3p))^{\frac{1}{2}}(\log n_1)^{\frac{2}{5}}=o(1)\,,
$$
for some constant $\alpha > 7/15$.
\end{assumption}

Assumption \ref{assum::holder-class} imposes constraints on the support of $\bX$ and smoothness condition on the true densities $\bbf_0$ and $\bbg_0$, which help control the  estimation error incurred by the nonparametric density estimates. Assumption \ref{assum::density-truncation} assumes that the marginal densities and the kernel are strictly positive on $[-1,1]^p$. Assumption \ref{assumption:vanishing} puts a restriction on the growth of the dimensionality $p$ in terms of sample size $n_1$.

We now provide a lemma to bound the uniform deviation between $\hat{f}_j$ and $f_{0,j}$ for $j=1,\cdots, p$.
\begin{lemma}\label{lemma::fhat-ghat}
Under Assumptions \ref{assum::holder-class}-\ref{assumption:vanishing}, taking the bandwidth $h = \left(\frac{\log n_1}{n_1}\right)^{1/5}$, for any $\delta_1>0$, there exists $N_1^*$ such that if $n_1\geq N_1^*$,
$$
P^{+-}\left(\max_{1\leq j\leq p} \|\hat{f}_j -  f_{0,j}\|_{\infty} \geq m\right)\leq \delta_1\,, \text{ and }
P^{+-}\left(\max_{1\leq j\leq p} \|\hat g_j -  g_{0,j}\|_{\infty} \geq m\right)\leq \delta_1\,,
$$
for $m =C_2  \sqrt{\frac{2\log(3p/\delta_1)}{n_1^{1-\alpha}}}$, and $C_2$ is an absolute constant.
\end{lemma}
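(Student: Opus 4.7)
The plan is a standard bias/variance decomposition for the kernel density estimator combined with a covering argument in $x$ and a union bound over the $p$ coordinates. Write
\[
\hat f_j(x) - f_{0,j}(x) = \underbrace{\hat f_j(x) - E^{+}\hat f_j(x)}_{\text{stochastic part}} + \underbrace{E^{+}\hat f_j(x) - f_{0,j}(x)}_{\text{bias}}.
\]
For the bias, use Assumption \ref{assum::holder-class}: since $f_{0,j}\in \mathcal{P}_\Sigma(2,L,[-1,1])$ and $K$ is a standard symmetric second-order kernel, a one-term Taylor expansion together with $\int K(u)u\,du=0$ gives $\sup_{x\in[-1,1]}|E^{+}\hat f_j(x)-f_{0,j}(x)|\le C\,h^{2}$ uniformly in $j$. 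With $h=(\log n_1/n_1)^{1/5}$ this is $O((\log n_1)^{2/5}n_1^{-2/5})$, which is $o(m)$ once $\alpha>1/5$, and in particular under Assumption \ref{assumption:vanishing}.

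For the stochastic part at a \emph{fixed} $x$, write $\hat f_j(x)-E^{+}\hat f_j(x)=n_1^{-1}\sum_{i=1}^{n_1}\xi_i(x)$ where the summands are iid, centered, and bounded by $2\|K\|_\infty/h$. Hoeffding's inequality then yields
\[
P^{+}\!\left(|\hat f_j(x)-E^{+}\hat f_j(x)|>t\right)\le 2\exp\!\left(-\frac{c\, n_1 h\, t^{2}}{\|K\|_\infty^{2}}\right).
\]
To turn this into a uniform bound over $x\in[-1,1]$, I would use the Lipschitz property of $\hat f_j$ (with constant of order $h^{-2}$ under standard smoothness of $K$) to discretize $[-1,1]$ on a grid of size $N_h\asymp h^{-2}t^{-1}$; the sup is then controlled by the maximum over the grid plus a deterministic error of order $t$. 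A union bound over grid points gives
\[
P^{+}\!\left(\sup_{x\in[-1,1]}|\hat f_j(x)-E^{+}\hat f_j(x)|>2t\right)\le \frac{C}{h^{2}t}\exp\!\left(-\frac{c\,n_1 h\, t^{2}}{\|K\|_\infty^{2}}\right).
\]
Finally, a second union bound over $j=1,\dots,p$ produces a factor of $p$ in front.

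I would then set $t=m/2$ with $m=C_2\sqrt{2\log(3p/\delta_1)/n_1^{1-\alpha}}$ and verify that, for $C_2$ large enough, the exponent $c\,n_1 h\, t^{2}\asymp n_1^{\alpha-1/5}(\log n_1)^{1/5}\log(3p/\delta_1)$ dominates the logarithm $\log(p/(h^2 t\delta_1))$ coming from the covering/union bound. This is where Assumption \ref{assumption:vanishing} does real work: the two vanishing conditions precisely guarantee that the $(\log n_1)$ and $(\log p)$ terms picked up from the covering number and the $p$-union bound are dominated by $n_1 h t^{2}$, and simultaneously that the $O(h^{2})$ bias is absorbed into $m/2$. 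Combined with the truncation of $\hat f_j$ at $[\epsilon_l,\epsilon_l^{-1}]$ from Assumption \ref{assum::density-truncation} (which keeps $\hat f_j-f_{0,j}$ bounded and does not worsen the deviation since $f_{0,j}$ lies in this band), one obtains the claimed bound for all sufficiently large $n_1\ge N_1^{*}$. The argument for $\hat g_j$ is identical, applied to the independent sample $\{\bX_i^{-}\}$ under $P^{-}$.

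The main obstacle is the bookkeeping at the last step: balancing the grid resolution $h^{-2}t^{-1}$ against the Hoeffding exponent $n_1 h t^{2}$ and against the $(\log p)$ and bias $O(h^{2})$ terms, so that all ``log'' factors fit into the clean statement $m=C_2\sqrt{2\log(3p/\delta_1)/n_1^{1-\alpha}}$. This is exactly the role played by the specific exponents $7/20-3\alpha/4$ and $1/10-\alpha$ in Assumption \ref{assumption:vanishing}: they are calibrated to make the covering/union-bound overhead negligible at the chosen bandwidth $h=(\log n_1/n_1)^{1/5}$, and verifying that these two inequalities are the precise quantitative content needed is the most delicate part of the write-up.
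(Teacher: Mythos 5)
Your proposal is correct in outline but takes a different route from the paper. The paper does not perform the bias/variance decomposition, pointwise exponential inequality, and covering argument itself: it starts from a Chernoff-type bound $P^{+-}(\max_j\|\hat f_j-f_{0,j}\|_\infty\geq m)\leq e^{-rm}\sum_j E^{+-}\exp(r\|\hat f_j-f_{0,j}\|_\infty)$, controls each moment generating function by combining the uniform boundedness $\|\hat f_j-f_{0,j}\|_\infty\leq\epsilon_l^{-1}$ with an imported high-probability uniform deviation bound (Lemma~1 of Tong, 2012, which is where the bias/variance and covering work you describe actually lives), and then makes the ad hoc choices $r=n_1^{1-\alpha}m$ and $\delta_2=\exp(-r\epsilon_l^{-1})$ before invoking Assumption~\ref{assumption:vanishing} to kill the residual terms. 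Your version is essentially a self-contained re-derivation of that imported lemma followed by a plain union bound over $j$; what it buys is transparency about where each $\log p$ and $\log n_1$ factor originates and it avoids the somewhat opaque tuning of $r$ (the paper itself concedes that optimizing $r$ ``does not seem to be feasible''), at the cost of needing explicit regularity of the kernel (boundedness and Lipschitz continuity of $K$) that the paper leaves implicit in the citation. One small inaccuracy to fix in the write-up: Hoeffding's inequality with summands bounded by $2\|K\|_\infty/h$ gives an exponent of order $n_1h^2t^2$, not $n_1ht^2$; the latter is the Bernstein rate, obtained by also using $\mathrm{Var}(\xi_i)=O(1/h)$. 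This does not break your argument --- with $t\asymp m$ the Hoeffding exponent is of order $n_1^{\alpha-2/5}\log(3p/\delta_1)$ up to logs, which still dominates the covering and union-bound overhead because Assumption~\ref{assumption:vanishing} forces $\alpha>7/15>2/5$ --- but you should either relabel the inequality as Bernstein or carry the weaker exponent through the final bookkeeping.
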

 Denote by
$$
\mJ_2 = \left\{ \max_{1\leq j\leq p} \|\hat{f}_j -  f_{0,j}\|_{\infty} \leq \eta/2, \max_{1\leq j\leq p} \|\hat g_j -  g_{0,j}\|_{\infty} \leq \eta/2     \right\},
$$
where $\eta$ is the constant in the uniform margin condition.
 It is straightforward from Lemma \ref{lemma::fhat-ghat} that
$$
P^{+-}(\mJ_2)\geq 1- \frac{6p}{\exp(\eta^2 n_1^{1-\alpha}/4C_2^2)}\,.
$$
The next lemma can be similarly derived as Lemma \ref{lemma::fhat-ghat}, so its proof is omitted.
\begin{lemma}\label{uniformDev}
Under Assumptions \ref{assum::holder-class}-\ref{assumption:vanishing}, taking the bandwidth $h = \left(\frac{\log n_1}{n_1}\right)^{1/5}$, for any $\delta>0$,  there exists $N_2^{*}$ such that if $n_1\geq N_2^{*}$, \begin{eqnarray*}
P^{+-}\left( \|\bE\|_{\max} \geq m\right)\leq\delta\,,
\end{eqnarray*}
where $\bE$ is the estimation error matrix as defined in Lemma \ref{lem:compatibility-constant} and $m = C_3  \sqrt{\frac{2\log(3p/ \delta)}{n_1^{1-\alpha}}}$ for some absolute constant $C_3$.
\end{lemma}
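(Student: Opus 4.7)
The plan is to reduce the claim to Lemma \ref{lemma::fhat-ghat} (uniform consistency of the marginal kernel density estimates) via a simple Lipschitz argument for the logarithm on a compact positive interval. Recall that by definition
$$
E_{ij} = \bigl[\log\hat{f}_j(X_{ij}) - \log f_{0,j}(X_{ij})\bigr] - \bigl[\log\hat{g}_j(X_{ij}) - \log g_{0,j}(X_{ij})\bigr],
$$
so the first step is to apply the triangle inequality in $|\cdot|$ to separate the contribution of $\hat{\bbf}$ from that of $\hat{\bbg}$, obtaining
$\|\bE\|_{\max}\le \max_j\|\log\hat{f}_j-\log f_{0,j}\|_\infty + \max_j\|\log\hat{g}_j-\log g_{0,j}\|_\infty$.

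Next I would exploit Assumption~\ref{assum::density-truncation}. Since the true densities satisfy $f_{0,j},g_{0,j}\in[\epsilon_l,\epsilon_l^{-1}]$ and since the estimators $\hat{f}_j,\hat{g}_j$ are explicitly truncated to the same interval, the function $t\mapsto\log t$ is $1/\epsilon_l$-Lipschitz on $[\epsilon_l,\epsilon_l^{-1}]$. Applying this pointwise (and then taking sup over $x\in[-1,1]$, which contains the support of $\bX$ by Assumption~\ref{assum::holder-class}) yields
$$
\|\log\hat{f}_j-\log f_{0,j}\|_\infty \le \tfrac{1}{\epsilon_l}\|\hat{f}_j-f_{0,j}\|_\infty,\qquad \|\log\hat{g}_j-\log g_{0,j}\|_\infty \le \tfrac{1}{\epsilon_l}\|\hat{g}_j-g_{0,j}\|_\infty,
$$
and hence
$$
\|\bE\|_{\max}\le \tfrac{1}{\epsilon_l}\Bigl(\max_{1\le j\le p}\|\hat{f}_j-f_{0,j}\|_\infty + \max_{1\le j\le p}\|\hat{g}_j-g_{0,j}\|_\infty\Bigr).
$$

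Now I would invoke Lemma~\ref{lemma::fhat-ghat} with tolerance $\delta/2$ for each of the two maxima, and combine them by a union bound. That lemma gives, for the stated bandwidth $h=(\log n_1/n_1)^{1/5}$ and for $n_1$ larger than a threshold $N_1^*(\delta/2)$, the simultaneous bound $\max_j\|\hat{f}_j-f_{0,j}\|_\infty\vee\max_j\|\hat{g}_j-g_{0,j}\|_\infty \le C_2\sqrt{2\log(6p/\delta)/n_1^{1-\alpha}}$ with probability at least $1-\delta$. Substituting into the previous display shows $\|\bE\|_{\max}\le (2C_2/\epsilon_l)\sqrt{2\log(6p/\delta)/n_1^{1-\alpha}}$ with probability at least $1-\delta$, which (after absorbing the logarithmic constant $\log 6$ versus $\log 3$ into the leading constant, valid for $n_1$ large) gives the asserted form with $C_3$ proportional to $C_2/\epsilon_l$ and $N_2^*=N_1^*(\delta/2)$.

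Honestly, there is no real obstacle: the bulk of the work has already been done in Lemma~\ref{lemma::fhat-ghat}. The only thing that needs care is ensuring the Lipschitz constant $1/\epsilon_l$ is legitimately applied on the entire relevant range, which is precisely why Assumption~\ref{assum::density-truncation} enforces the explicit truncation of $\hat{f}_j$ and $\hat{g}_j$ at $\epsilon_l$ and $\epsilon_l^{-1}$; without that truncation one would face the possibility of $\hat{f}_j(x)$ being arbitrarily close to $0$ and the logarithm blowing up. Everything else is bookkeeping of constants and a single union bound.
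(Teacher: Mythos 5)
Your proof is correct and is essentially what the paper intends: the paper omits this proof entirely, saying only that it ``can be similarly derived as Lemma \ref{lemma::fhat-ghat}'', and your reduction via the triangle inequality, the $1/\epsilon_l$-Lipschitz property of $\log$ on $[\epsilon_l,\epsilon_l^{-1}]$ (Assumption \ref{assum::density-truncation}), and a union bound over two applications of Lemma \ref{lemma::fhat-ghat} at level $\delta/2$ fills that gap cleanly, with the $\log(6p/\delta)$ versus $\log(3p/\delta)$ discrepancy correctly absorbed into $C_3$. One small point you use only implicitly and could make explicit: since $f_{0,j}\in[\epsilon_l,\epsilon_l^{-1}]$, projecting $\hat f_j$ onto that interval can only decrease $\|\hat f_j-f_{0,j}\|_\infty$ pointwise, so the bound of Lemma \ref{lemma::fhat-ghat} (stated for the raw kernel estimates) transfers verbatim to the truncated estimates to which your Lipschitz step is applied.
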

\begin{corollary}
Under Assumptions \ref{assum::holder-class}-\ref{assumption:vanishing}, take the bandwidth $h = \left(\frac{\log n_1}{n_1}\right)^{1/5}$. On the  event $\mathcal{J}_3 = \left\{\|\bE\|_{\max} \leq C_3  \sqrt{\frac{2\log(3p/ \delta)}{n_1^{1-\alpha}}}\right\}$ (regarding labeled samples) with $P^{+-}(\mJ_3)>1-\delta$,    there exists $N_2^{*}\in\mathbb{N}$ and $C_4>0$ such that if $n_1\geq N_2^{*}$,  $|F_{kl}| = |\hat Z_{1k} - Z^0_{1k}|\cdot |\hat Z_{1l} - Z^0_{1l}|  \leq C_4 b_{n_1}$ uniformly for $k,l=1,\cdots,p$, where $b_{n_1} = 2\log (3p/ \delta)/n_1^{1-\alpha}$.
 %
\end{corollary}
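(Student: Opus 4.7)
The corollary is essentially a direct consequence of Lemma \ref{uniformDev}, so my plan is to deduce it in a few lines rather than by any new technical device.

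First I would observe that by definition of $\bE = \hat\bZ - \bZ^0$, the quantity $|\hat Z_{1k} - Z^0_{1k}|$ is nothing but $|E_{1k}|$, which is bounded by $\|\bE\|_{\max}$. Hence on the event $\mathcal{J}_3$ I have, uniformly in $k$,
\begin{equation*}
|\hat Z_{1k} - Z^0_{1k}| \;\leq\; \|\bE\|_{\max} \;\leq\; C_3\sqrt{\frac{2\log(3p/\delta)}{n_1^{1-\alpha}}}.
\end{equation*}
Applying the same bound to the $l$-th coordinate and multiplying the two inequalities gives
\begin{equation*}
|F_{kl}| \;=\; |\hat Z_{1k}-Z^0_{1k}|\cdot|\hat Z_{1l}-Z^0_{1l}| \;\leq\; C_3^2 \cdot \frac{2\log(3p/\delta)}{n_1^{1-\alpha}} \;=\; C_3^2\, b_{n_1},
\end{equation*}
uniformly in $k,l\in\{1,\ldots,p\}$. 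Setting $C_4 := C_3^2$ (an absolute constant, since $C_3$ was absolute in Lemma \ref{uniformDev}) yields the stated bound, and the threshold $N_2^*$ is inherited from the hypothesis of Lemma \ref{uniformDev} that produced the event $\mathcal{J}_3$ with probability at least $1-\delta$.

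The probability control is already supplied by Lemma \ref{uniformDev}, so I would not reprove it; I would simply note $P^{+-}(\mathcal{J}_3)\geq 1-\delta$ in passing. There is no substantive obstacle here: the only minor subtlety is that the statement concerns a product of two entries rather than a single entry, but since each factor is controlled by the same $\|\bE\|_{\max}$ bound, the product bound is quadratic in $\sqrt{2\log(3p/\delta)/n_1^{1-\alpha}}$, which is exactly $b_{n_1}$ up to the absolute constant $C_3^2$. Thus the corollary is a direct squaring of the max-norm bound, and the proof amounts to citing Lemma \ref{uniformDev} and performing this one-line multiplication.
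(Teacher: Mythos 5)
Your proposal is correct and is exactly the argument the paper intends: the corollary is stated without proof as an immediate consequence of Lemma \ref{uniformDev}, obtained by bounding each factor $|\hat Z_{1k}-Z^0_{1k}|$ by $\|\bE\|_{\max}$ on $\mathcal{J}_3$ and squaring, so that $C_4=C_3^2$. Nothing further is needed.
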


\noindent Denote by
$$
\mJ_4 =\left\{32 \|\bE\|_{\max}  \max_{S\subset \{1,\ldots,p\}}\frac{|S|}{\phi(S)^2}\leq 1\right\}\,.
$$
On the event $\mJ_4$, the inequality \eqref{eq::compatibility-constant} holds, and the compatibility condition is satisfied for $\hat\bZ$ if we assume Assumption \ref{assum::compatibility-condition} (by Lemma \ref{lem:compatibility-constant}).   Moreover, it can be derived from Lemma \ref{uniformDev} by taking a specific $\delta$,
$$P^{+-}(\mJ_4)\geq 1-3p\exp\{-n_1^{1-\alpha}/(2048 C_3^2A_p^2)\}\,, $$
where $A_p= \max_{S\subset \{1,\ldots,p\}}{|S|}/{\phi(S)^2}$.
Combining  Lemma \ref{lemma::fhat-ghat} and the uniform margin condition, we see that for given estimators $\hat \bbf$ and $\hat \bbg$, the margin condition holds for the estimated transformed matrix $\hat\bZ$ involved in the FANS estimator $\hat\bbeta_1$.
Following similar lines as in  \cite{Geer.2008} delivers the following theorem, so a formal proof is omitted.

\begin{theorem}[Oracle Inequality]\label{th:1}
In addition to Assumptions \ref{assum::compatibility-condition}-\ref{assumption:vanishing},  assume $\|m_{\bbeta^*_1}-m_{\bbeta_0}\|_{\infty}\leq \eta/2$ and $\mathcal{E}(m_{\bbeta^*_1}, \hat\bbf,\hat\bbg)/\lambda_0\leq \eta/4$. Then on the event $\mJ_1\cap
\mJ_2\cap \mJ_3\cap \mJ_4$, we have
  \begin{eqnarray*}
    \mathcal{E}(m_{\hat\bbeta_1},\hat\bbf,\hat\bbg)+\lambda\|\hat\bbeta_1-\bbeta_1^*\|_1\leq 6\mathcal{E}(m_{\bbeta_1^*},\hat\bbf,\hat\bbg)+\frac{16\lambda^2{s_{\beta_1^*}}(e^{\eta}/\epsilon_0+1)^2}{c\phi_1^2(S_{\beta_1^*})}\,.
  \end{eqnarray*}
 Moreover,  when $n_1\ge \max(N_1^*,N_2^{*})$ and under the normalization condition that $\|Z_{1j}\|_{\infty}\leq 1$ for all $j = 1, \cdots, p$, it holds that
  \begin{eqnarray*}
    \mathbb{P}(\mJ_1\cap \mJ_2\cap \mJ_3\cap \mJ_4)\geq 1-\exp(-t)-6p\exp\{-\eta^2 n_1^{1-\alpha}/(4C_2^2)\}-\delta-3p\exp\{-n_1^{1-\alpha}/(2048 C_3^2A_p^2)\}\,,
  \end{eqnarray*}
 for
  \begin{eqnarray*}
    \lambda_0:=4\lambda^*+\frac{tK}{3n}+\sqrt{\frac{2t}{n}(1+8\lambda^*)}\,,
  \end{eqnarray*}
  where $\mathbb{P}$ is the probability with regards to all the samples and
  \begin{eqnarray*}
     \lambda^*=  \sqrt{\frac{2\log(2p)}{n}}+\frac{K\log (2p)}{3n}\,.
  \end{eqnarray*}

\end{theorem}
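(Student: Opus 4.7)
The plan is to follow the penalized M-estimator argument of van de Geer (2008), adapted here because the effective design matrix $\hat\bZ$ is itself constructed from nonparametric density estimates. The argument separates cleanly into a deterministic analysis on the good event $\mJ_1\cap\mJ_2\cap\mJ_3\cap\mJ_4$ and a probabilistic analysis bounding each of the four failure probabilities. I would begin from the basic inequality
$$ P_n\rho(m_{\hat\bbeta_1},\hat\bbf,\hat\bbg)+\lambda\|\hat\bbeta_1\|_1 \le P_n\rho(m_{\bbeta_1^*},\hat\bbf,\hat\bbg)+\lambda\|\bbeta_1^*\|_1, $$
coming from the definition of $\hat\bbeta_1$, and convert it into an excess-risk inequality by adding and subtracting $P\rho(\cdot)$ on each side:
$$ \mathcal{E}(\hat q_1) + \lambda\|\hat\bbeta_1\|_1 \le \mathcal{E}(q_1^*) + \lambda\|\bbeta_1^*\|_1 + \bigl[v_n(\bbeta_1^*) - v_n(\hat\bbeta_1)\bigr]. $$
On $\mJ_1$ the empirical-process increment is bounded by $\lambda_0\|\hat\bbeta_1-\bbeta_1^*\|_1$ provided $\|\hat\bbeta_1-\bbeta_1^*\|_1 \le M^*$; to remove this \emph{a priori} localization I would invoke the standard convexity/continuity device that transports the bound from the line segment to the endpoint, since the penalized objective is convex in $\bbeta$.

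The deterministic core then splits $\|\hat\bbeta_1-\bbeta_1^*\|_1$ over $S_{\bbeta_1^*}$ and its complement. The condition $\lambda>8\lambda_0$ from Assumption \ref{assum:para-constraint} forces a cone-type inequality $\|(\hat\bbeta_1-\bbeta_1^*)_{S^c}\|_1 \le 3\|(\hat\bbeta_1-\bbeta_1^*)_{S}\|_1$, and Lemma \ref{lem:compatibility-constant} applied on $\mJ_4$ yields a compatibility condition for $\hat\bZ$ with constant $\phi_1(S_{\bbeta_1^*}) \ge \phi(S_{\bbeta_1^*})/\sqrt{2}$. This controls $\|(\hat\bbeta_1-\bbeta_1^*)_S\|_1$ by $\sqrt{s_{\bbeta_1^*}}\,\|\hat\bZ(\hat\bbeta_1-\bbeta_1^*)\|/(\sqrt{n}\,\phi_1)$. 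On $\mJ_2$, the density estimates sit within $\eta/2$ of the truth, so together with the hypothesis $\|m_{\bbeta_1^*}-m_{\bbeta_0}\|_\infty\le\eta/2$ we land inside the neighborhood of the uniform margin condition, which turns excess risk into a squared-norm lower bound. The factor $(e^{\eta}/\epsilon_0+1)^2$ in the target inequality reflects the Lipschitz-type bound relating $\mathcal{E}$ to a weighted design norm through the second derivative of the logistic link, which is kept bounded by Assumption \ref{assum::prob-regularity}. Applying the Fenchel--Young inequality to the convex conjugate pair $H(v)=v^2/(4c)$, $G(u)=cu^2$ balances the linear term $\lambda\sqrt{s_{\bbeta_1^*}}\|\hat\bZ(\hat\bbeta_1-\bbeta_1^*)\|/(\sqrt{n}\phi_1)$ against the quadratic excess risk and produces exactly the constants $6$ and $16$ appearing in the oracle bound.

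For the probability statement, $\mJ_2$, $\mJ_3$, $\mJ_4$ contribute the respective terms $6p\exp(-\eta^2 n_1^{1-\alpha}/(4C_2^2))$, $\delta$, and $3p\exp(-n_1^{1-\alpha}/(2048 C_3^2 A_p^2))$ via Lemmas \ref{lemma::fhat-ghat} and \ref{uniformDev} followed by a union bound. The remaining effort is the bound on $\mathbb{P}(\mJ_1^c)$: I would apply Bousquet's concentration inequality to the supremum $W_{M^*}$ of the centered empirical process $v_n(\bbeta)-v_n(\bbeta_1^*)$ indexed by $\|\bbeta-\bbeta_1^*\|_1\le M^*$. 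Since the logistic loss is $1$-Lipschitz and the normalized features satisfy $\|Z_{1j}\|_\infty\le 1$, symmetrization followed by the contraction principle reduces the expected supremum to a Rademacher average bounded by a multiple of $\sqrt{2\log(2p)/n}$, with a Bernstein/Bennett correction $K\log(2p)/(3n)$ since the covariates are bounded by $K$; this produces $\lambda^*$. Bousquet's inequality then adds the deviation contribution $tK/(3n)+\sqrt{2t(1+8\lambda^*)/n}$, giving the advertised form of $\lambda_0$ with failure probability $\exp(-t)$.

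The step I expect to be the main obstacle is the sharp control of $W_{M^*}$ defining $\lambda_0$: Bousquet's (rather than Massart's) inequality is essential to retain the $\sqrt{1+8\lambda^*}$ scaling in the deviation term, and one must verify that replacing $\bZ^0$ by $\hat\bZ$ does not spoil the Lipschitz and boundedness constants entering the concentration argument. This is precisely where events $\mJ_2$ and $\mJ_3$ play their second role, keeping $\|\hat\bZ\|_{\max}=O(K)$ so the same constants propagate. The delicate part is the coordinated bookkeeping: ensuring that $M^*\le\eta$ (so the margin regime applies), that the cone condition survives the perturbed design matrix, and that Fenchel--Young is applied with the correct convex conjugate so the constants $6$, $16$, and $(e^\eta/\epsilon_0+1)^2$ emerge cleanly rather than absorb into unwanted extra factors.
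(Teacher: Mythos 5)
Your proposal is correct and follows essentially the same route as the paper: the authors explicitly omit a formal proof of Theorem \ref{th:1}, stating that it follows the penalized M-estimation argument of \cite{Geer.2008} once the events $\mJ_1$--$\mJ_4$ and the compatibility, margin, and convex-conjugate ingredients are in place, which is precisely the decomposition you outline. Your probabilistic accounting (Bousquet concentration for $\mJ_1$ yielding $\lambda_0$ and $\lambda^*$, and Lemmas \ref{lemma::fhat-ghat}--\ref{uniformDev} with a union bound for $\mJ_2$, $\mJ_3$, $\mJ_4$) likewise matches the bounds the paper assembles just before stating the theorem.
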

Theorem 1 shows that with high probability, the excess risk of the FANS estimator can be controlled in terms of the excess risk of $q_1^*$ when using the estimated density functions $\hat\bbf$ and $\hat\bbg$ plus a term of explicit order. Next, we will study the excess risk of $q_1^*$.
\begin{assumption}\label{assum:beta1-sparsity}
Let  $\bZ_1^0(\bbeta_1)$ be the subvector of $\bZ_1^0$ corresponding to the nonzero components of $\bbeta_1$, and $b_{n_1}=  {\log (3p/\delta_1)}/{n_1^{1-\alpha}}$.  Assume $s_{\bbeta_1}\leq a_{n_1}$ for some deterministic  sequence $\{a_{n_1}\}$, and $a_{n_1}\cdot b_{n_1}=o(1)$.   In addition, $0< C_5 \le \lambda_{\min}(P\{\bZ_1^0(\bbeta_1)\bZ_1^0(\bbeta_1)^T\})$, for some absolute constant $C_5$. 
\end{assumption}
Assumption \ref{assum:beta1-sparsity} allows the number of nonzero elements of $\bbeta_1$ to diverge at a slow rate with $n_1$. Also, it demands a lower bound of the restricted eigenvalue of the sub-matrix of $\bZ^0$ corresponding to the nonzero components of $\bbeta_1$.

\begin{lemma}\label{lejj3}
Let
$Q(\bbeta)=P\rho(m_{\bbeta},\hat\bbf,\hat\bg)+\lambda\|\bbeta\|_0$, and  $\bar{\bbeta}_1=\min\{|\bbeta_{1,j}|:\, j\in S_{\bbeta_1}\}$.
Under Assumptions \ref{assum::prob-regularity}, \ref{assum::holder-class}, \ref{assum::density-truncation}, \ref{assumption:vanishing} and \ref{assum:beta1-sparsity}, on the event $\mJ_3$, there exists a constant $N_3^*$ such that, if  $n_1\ge N_3^*$ and the penalty parameter $\lambda< 0.5 C_5\epsilon_0(1-\epsilon_0)\bar{\bbeta}_1^2$, the $L_0$ penalized solution coincides with the unpenalized version; that is
 $\bbeta_1^*=\bbeta_1$.
\end{lemma}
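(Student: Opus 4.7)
The goal is to show that the $L_0$-regularized oracle objective $Q(\bbeta) = P\rho(m_{\bbeta},\hat\bbf,\hat\bg) + \lambda\|\bbeta\|_0$ is minimized exactly at the unpenalized minimizer $\bbeta_1$, so that $\bbeta_1^* = \bbeta_1$. The natural strategy is the standard beta-min argument: use restricted strong convexity of the population logistic loss around $\bbeta_1$ together with the minimum-signal bound $\bar\bbeta_1$ to show that deleting any true-signal coordinate from $\bbeta_1$ inflates $Q_0$ by strictly more than $\lambda$, so the $L_0$ penalty cannot pay for dropping a true variable; and conversely, adding any spurious coordinate only raises the penalty without reducing the loss, since $\bbeta_1$ already globally minimizes $Q_0 := P\rho(m_\bbeta,\hat\bbf,\hat\bg)$.

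The first step is to establish a quantitative restricted strong convexity estimate. The Hessian of $Q_0$ at any $\bbeta$ is $E[\pi_{\bbeta}(\hat\bZ)(1-\pi_{\bbeta}(\hat\bZ))\,\hat\bZ\hat\bZ^T]$. By Assumption \ref{assum::prob-regularity}, for $\bbeta\in\mathcal{C}_p$ we have $\pi(1-\pi)\geq \epsilon_0(1-\epsilon_0)$. On $\mathcal{J}_3$, $\|\hat\bZ-\bZ^0\|_{\max}=O(\sqrt{\log(3p/\delta)/n_1^{1-\alpha}})$ by Lemma \ref{uniformDev}, and by Assumption \ref{assum:beta1-sparsity} ($s_{\bbeta_1}\leq a_{n_1}$ with $a_{n_1}b_{n_1}=o(1)$) a matrix-perturbation argument upgrades the lower bound $\lambda_{\min}(P\{\bZ_1^0(\bbeta_1)\bZ_1^0(\bbeta_1)^T\})\geq C_5$ of Assumption \ref{assum:beta1-sparsity} into $\lambda_{\min}(E[\hat\bZ_{S_{\bbeta_1}}\hat\bZ_{S_{\bbeta_1}}^T])\geq C_5(1-o(1))$ for $n_1\geq N_3^*$ large enough. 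Combining, the $S_{\bbeta_1}\times S_{\bbeta_1}$ block of $\nabla^2 Q_0$ is uniformly bounded below by $\epsilon_0(1-\epsilon_0)C_5(1-o(1))$.

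Next, I would argue via case analysis. Any candidate $L_0$ minimizer $\bar\bbeta$ satisfies $s_{\bar\bbeta}\leq s_{\bbeta_1}$, because $Q(\bar\bbeta)\leq Q(\bbeta_1)$ while $Q_0(\bar\bbeta)\geq Q_0(\bbeta_1)$ by global optimality of $\bbeta_1$. For $\bbeta$ with $\mathrm{supp}(\bbeta)\supseteq S_{\bbeta_1}$, $\bbeta_1$ is feasible for the restricted problem, so $Q_0(\bbeta)\geq Q_0(\bbeta_1)$ and $\lambda|\mathrm{supp}(\bbeta)|\geq \lambda s_{\bbeta_1}$; equality in both requires $\mathrm{supp}(\bbeta)=S_{\bbeta_1}$, and uniqueness of the strictly convex restricted minimizer then forces $\bbeta=\bbeta_1$. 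For $\bbeta$ with $\mathrm{supp}(\bbeta)\not\supseteq S_{\bbeta_1}$, set $k=|S_{\bbeta_1}\setminus\mathrm{supp}(\bbeta)|\geq 1$; the displacement $\bu=\bbeta-\bbeta_1$ has $\|\bu_{S_{\bbeta_1}}\|_2^2\geq k\bar\bbeta_1^2$. A second-order Taylor expansion of $Q_0$ around $\bbeta_1$ (using $\nabla Q_0(\bbeta_1)=0$) combined with the restricted strong convexity yields
$$Q_0(\bbeta)-Q_0(\bbeta_1)\;\geq\;\tfrac{1}{2}\epsilon_0(1-\epsilon_0)C_5\,k\bar\bbeta_1^2\;>\;k\lambda,$$
by the hypothesis $\lambda<0.5\,C_5\epsilon_0(1-\epsilon_0)\bar\bbeta_1^2$. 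Since $\lambda(s_{\bbeta_1}-|\mathrm{supp}(\bbeta)|)\leq k\lambda$, adding the $L_0$ penalty preserves the strict inequality $Q(\bbeta)>Q(\bbeta_1)$, contradicting optimality of any $\bbeta\neq\bbeta_1$.

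The principal technical obstacle is the last quadratic lower bound when $\bu=\bbeta-\bbeta_1$ has large components \emph{outside} $S_{\bbeta_1}$: the $S_{\bbeta_1}$-block Hessian bound does not directly control the cross terms in $\bu^T\nabla^2 Q_0\bu$. I would bridge this gap by first using $s_{\bar\bbeta}\leq s_{\bbeta_1}$ to ensure $|\mathrm{supp}(\bbeta)\cup S_{\bbeta_1}|\leq 2s_{\bbeta_1}$, then extending the perturbation argument of Step 1 to all principal submatrices of size at most $2s_{\bbeta_1}$ (still permitted by $a_{n_1}b_{n_1}=o(1)$), thereby obtaining a sparse restricted eigenvalue bound; alternatively, one can argue a greedy swap that replaces each spurious coordinate by a missed signal and strictly decreases $Q$, reducing to the case $\mathrm{supp}(\bbeta)\subseteq S_{\bbeta_1}$ where the restricted bound applies verbatim. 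Either route delivers the required strict inequality and concludes that $\bbeta_1^*=\bbeta_1$.
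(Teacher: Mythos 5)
Your argument follows essentially the same route as the paper's: Taylor-expand $Q_0(\bbeta)=P\rho(m_{\bbeta},\hat\bbf,\hat\bg)$ at $\bbeta_1$ using $\nabla Q_0(\bbeta_1)=0$, bound the Hessian below by $\epsilon_0(1-\epsilon_0)P\{\hat\bZ_1\hat\bZ_1^T\}$ via Assumption \ref{assum::prob-regularity}, transfer the eigenvalue bound $C_5$ of Assumption \ref{assum:beta1-sparsity} from $\bZ^0$ to $\hat\bZ$ by a Bauer--Fike perturbation on $\mJ_3$ (using $a_{n_1}b_{n_1}=o(1)$), and then let the beta-min condition $\lambda<0.5C_5\epsilon_0(1-\epsilon_0)\bar{\bbeta}_1^2$ force the quadratic gain to dominate the $L_0$ savings. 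The only real difference is the endgame: the paper reduces to a coordinatewise separable problem $\sum_{j\in S_{\bbeta_1}}\bigl\{c_j(\beta_j-\beta_{1,j})^2+\lambda\|\beta_j\|_0\bigr\}$ and minimizes each summand via Lemma \ref{lejj1a}, whereas you run a case analysis on $\mathrm{supp}(\bbeta)$ versus $S_{\bbeta_1}$ (and your preliminary observation that any $L_0$ minimizer satisfies $s_{\bar{\bbeta}}\leq s_{\bbeta_1}$ is a nice, correct reduction). These are equivalent in substance.

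The ``principal technical obstacle'' you flag is genuine, and it is worth noting that the paper's own proof does not resolve it either: the first inequality of \eqref{eqjj9a} bounds the full quadratic form $(\bbeta-\bbeta_1)^T\nabla^2Q_0(\tilde{\bbeta})(\bbeta-\bbeta_1)$ below by the eigenvalue bound applied only to $\|\bbeta^{(1)}-\bbeta_1^{(1)}\|^2$, which fails in general because negative cross terms between the $S_{\bbeta_1}$ and $S_{\bbeta_1}^c$ blocks can cancel the diagonal contribution (a positive semidefinite matrix's quadratic form is not bounded below by that of a principal submatrix evaluated on the corresponding subvector). So you have correctly isolated a gap that the paper silently passes over. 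Be aware, however, that your first proposed repair --- a restricted eigenvalue bound over all principal submatrices of size at most $2s_{\bbeta_1}$ --- requires strengthening Assumption \ref{assum:beta1-sparsity}, which only controls the single submatrix indexed by $S_{\bbeta_1}$; and the greedy-swap alternative is not developed enough to verify. As written, neither your proof nor the paper's closes this step without an additional sparse-eigenvalue-type hypothesis.
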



\begin{theorem}[Oracle Inequality]\label{thm::oracle-final}
In addition to  Assumptions \ref{assum::compatibility-condition}-\ref{assum:beta1-sparsity}, suppose $4C_1C_4s_{\bbeta_0}^2b_{n_1}\leq \lambda_0\eta$, the penalty parameter $\lambda \in (8\lambda_0, \min(L\lambda_0,0.5 C_5\epsilon_0(1-\epsilon_0)\cdot\min_{j:\beta_{1,j}\neq 0} (|\beta_{1,j}|)))$,
where $C_5$ is defined in Assumption 9, $\|m_{\bbeta^*_1}-m_{\bbeta_0}\|_{\infty}\leq \eta/2$ and $n_1\geq \max(N^*_1, N^*_2, N^*_3)$. Taking the bandwidth $h = \left(\frac{\log n_1}{ n_1}\right)^{1/5}$, on the event $\mJ_1\cap \mJ_2 \cap \mJ_3\cap \mJ_4$ as in Theorem 1, we have

\begin{equation*}
\mathcal{E}(m_{\bbeta_1^*},\hat\bbf,\hat\bbg)
\le  C_1C_4s_{\bbeta_0}^2 b_{n_1}\,.
\end{equation*}
Then in view of Theorem 1, we have
\begin{equation*}
\mathcal{E}(m_{\hat\bbeta_1},\hat\bbf,\hat\bbg)
\le  \frac{16\lambda^2{s_{\beta_1}^*}(e^{\eta}/\epsilon_0+1)^2}{c\phi_1^2(S_{\beta_1^*})}
+6C_1C_4s_{\bbeta_0}^2 b_{n_1}\,.
\end{equation*}

\end{theorem}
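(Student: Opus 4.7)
The plan is to reduce $\mathcal{E}(m_{\bbeta_1^*},\hat\bbf,\hat\bbg)$ to an unpenalized comparison against the true-density loss, and then to kill the resulting first-order Taylor term using the correct specification of the working model. First, under the stated hypothesis on $\lambda$ (which encompasses the bound required by Lemma~\ref{lejj3}), we have on $\mJ_3$ that $\bbeta_1^*=\bbeta_1$, where $\bbeta_1=\arg\min_\bbeta P\rho(m_\bbeta,\hat\bbf,\hat\bbg)$. Since $\bbeta_1$ minimizes $P\rho(m_\bbeta,\hat\bbf,\hat\bbg)$,
\begin{align*}
\mathcal{E}(m_{\bbeta_1^*},\hat\bbf,\hat\bbg)
&= P\rho(m_{\bbeta_1},\hat\bbf,\hat\bbg)-P\rho(m_{\bbeta_0},\bbf_0,\bg_0)\\
&\le P\rho(m_{\bbeta_0},\hat\bbf,\hat\bbg)-P\rho(m_{\bbeta_0},\bbf_0,\bg_0)=:\Delta,
\end{align*}
so the whole task reduces to controlling $\Delta$.

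Next, I would second-order Taylor expand $z\mapsto -Y_1 z+\log(1+e^z)$ around $z_0:=\bbeta_0^T\bZ_1^0$ at the evaluation point $\hat z:=\bbeta_0^T\hat\bZ_1=z_0+\bbeta_0^T\bE_1$. With $\pi(z)=e^z/(1+e^z)$ and $\pi'(z)\in(0,1/4]$, this gives
$$
\rho(\hat z)-\rho(z_0)=(\pi(z_0)-Y_1)\,\bbeta_0^T\bE_1+\tfrac12\,\pi(\tilde z)(1-\pi(\tilde z))\,(\bbeta_0^T\bE_1)^2.
$$
Because $\hat\bbf,\hat\bbg$ are built from labeled samples independent of $(\bX_1,Y_1)$, conditioning on $\hat\bbf,\hat\bbg$ and invoking the working model \eqref{eq::logit-fans} at $q^{*}$ gives $E[Y_1|\bX_1]=\pi(z_0)$, whence
$$
E[(\pi(z_0)-Y_1)\,\bbeta_0^T\bE_1]=E\bigl[(\pi(z_0)-E[Y_1|\bX_1])\,\bbeta_0^T\bE_1\bigr]=0.
$$
Hence $\Delta\le\tfrac18\,E[(\bbeta_0^T\bE_1)^2]$.

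For the quadratic piece I expand over $S_{\bbeta_0}$ and invoke the Corollary to Lemma~\ref{uniformDev}, which on $\mJ_3$ yields $|E_{1k}E_{1l}|\le C_4 b_{n_1}$ uniformly in $k,l$. Combined with Assumption~\ref{assum::Z-max-beta-0-max},
$$
(\bbeta_0^T\bE_1)^2=\sum_{k,l\in S_{\bbeta_0}}\beta_{0,k}\beta_{0,l}E_{1k}E_{1l}\le C_4 b_{n_1}\|\bbeta_0\|_1^2\le C_1^2 C_4\, s_{\bbeta_0}^2 b_{n_1},
$$
and absorbing the $1/8$ and an extra factor of $C_1$ into the constant establishes the first inequality, $\mathcal{E}(m_{\bbeta_1^*},\hat\bbf,\hat\bbg)\le C_1C_4 s_{\bbeta_0}^2 b_{n_1}$. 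For the second inequality, the hypothesis $4C_1C_4 s_{\bbeta_0}^2 b_{n_1}\le\lambda_0\eta$ together with the first bound yields $\mathcal{E}(m_{\bbeta_1^*},\hat\bbf,\hat\bbg)/\lambda_0\le\eta/4$, which is precisely the outstanding hypothesis of Theorem~\ref{th:1}. Plugging the first bound into the conclusion of Theorem~\ref{th:1} produces the stated final inequality.

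The main obstacle is ensuring that the linear-in-$\bE$ Taylor term actually vanishes. A naive Cauchy--Schwarz applied to that term would leave $|E[(\pi(z_0)-Y_1)\bbeta_0^T\bE_1]|=O(s_{\bbeta_0}\sqrt{b_{n_1}})$, which dominates and destroys the claimed $s_{\bbeta_0}^2 b_{n_1}$ rate. The cancellation hinges on two ingredients that must be used simultaneously: (i) the correctly-specified working model \eqref{eq::logit-fans}, so that $E[Y_1|\bX_1]=\pi(\bbeta_0^T\bZ_1^0)$ almost surely, and (ii) the sample-splitting that makes $\bE$ measurable with respect to data independent of $(\bX_1,Y_1)$, enabling the iterated-expectation argument.
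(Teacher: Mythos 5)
Your proposal is correct and follows essentially the same route as the paper's proof: reduce to comparing $P\rho(m_{\bbeta_0},\hat\bbf,\hat\bbg)$ with $P\rho(m_{\bbeta_0},\bbf_0,\bg_0)$ via the minimizing property of $\bbeta_1$, Taylor expand the logistic loss so that the first-order term vanishes under the correctly specified working model, bound the quadratic term by $\bbeta_0^T P[\bE_1\bE_1^T]\bbeta_0\le C_1^2C_4 s_{\bbeta_0}^2 b_{n_1}$ via the Corollary, identify $\bbeta_1^*=\bbeta_1$ by Lemma~\ref{lejj3}, and feed the result into Theorem~\ref{th:1}. Your write-up is in fact slightly more careful than the paper's in two spots: you justify the vanishing of the linear Taylor term by iterated expectation plus the independence from the labeled samples (the paper only cites the unconditional score equation), and you explicitly verify the hypothesis $\mathcal{E}(m_{\bbeta_1^*},\hat\bbf,\hat\bbg)/\lambda_0\le\eta/4$ of Theorem~\ref{th:1} from the assumption $4C_1C_4s_{\bbeta_0}^2b_{n_1}\le\lambda_0\eta$, which the paper leaves implicit.
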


This theorem finale requires quite some conditions. We now de-convolute them by providing a high level description of the motivations behind these conditions.  Because FANS is essentially a two step procedure, we need both steps to do well in order to have the theoretical performance guarantee. The first step is to estimate the transformed features. In this step, we need   regularity conditions on the class conditional densities $\bbf_0$ and $\bbg_0$, and regularity conditions on the kernel density estimate components, such as the kernel $K$. Also, the sample size need to be big enough so that the kernel density estimate is close to the truth. The second step is  penalized logistic regression using the estimated transformed features. In this step, usual conditions on the penalty level, design matrix and signal strength are needed. Moreover, some conditions that link nonparametric and parametric components, i.e., the first and second steps, such as the uniform margin condition should be in place.

From Theorem 2, it is clear that the excess risk of the FANS estimator is naturally decomposed into two parts. One part is due to the nonparametric density estimation while the other part is due to the regularized logistic regression on the estimated transformed covariates. When both the penalty parameter $\lambda$ and the bandwidth $h$ of the nonparametric density estimates $\hat\bbf$ and $\hat\bbg$ are chosen appropriately, the FANS estimator will have a diminishing excess risk with high probability. Note that one can make explicit  $\lambda$ to obtain a bound on  the excess risk in terms of the sample sizes $n$ and $n_1$, and the dimensionality $p$. Also, it is worth noting that the development of oracle inequality of the FANS procedure $\hat\bbeta_1$ is accomplished via an important bridge of the $L_0$-regularized estimator $\bbeta_1^*$.

The oracle inequality   for FANS2 can be developed along similar lines. In particular, the parameter under the working model will be changed to $q_2=(m_{(\bbeta,\bgamma)},\bbf,\bbg)$ and the success probability given $\bX_1$ will be modeled by a modified logistic function
\begin{align}\label{eq::logit-fans2}
\pi_{q_2}(\bX_1)=P(Y_1=1|\bX_1)=\frac{1}{1+\exp(-\bbeta^T\bZ_1-\bgamma^T\bX_1)},
\end{align}
where we note that in addition to the transformed features, the original features are also included.  We would like to emphasize that $\bX_1$ is observed and therefore there is no need to control its estimation error as we did for $\bZ_1$. The conditions for the theory of FANS can be adapted to establish an oracle inequality for FANS2. We omit the details to avoid duplication of similar conditions and arguments.

\section{Discussion}\label{sec::discussion}

We propose a new two-step nonlinear rule FANS (and its variant FANS2) to tackle binary classification problems in high-dimensional settings. FANS first augments the original feature space by leveraging flexibility of nonparametric estimators, and then achieves feature selection through regularization (penalization). It combines linearly the best univariate transforms that essentially augment the original features for classification.
Since nonparametric techniques are only performed on each dimension, we enjoy a flexible decision boundary without suffering from the curse of dimensionality.   An array of simulation and real data examples, supported by an efficient parallelized algorithm,  demonstrate the competitive performance of  the new procedures.

To verify our methods' performance against model misspecification, we evaluate different classifiers on the following example that has non-additive optimal decision boundary. Similar to Example \ref{ex::uniform-ball-square}, FANS and FANS2 perform the best among all competing methods (penGAM performs slightly worse with a larger standard error).
\begin{example}\label{ex::nonadditive}
Non-additive decision boundary. In particular, for $\bx\sim N(\bzero_p,I_p)$, let $y = \1\{x_1^2\sqrt{x_2^2+x_3^4+1}\geq 0.75\}$.
\end{example}
\begin{table}[ht]
\caption{Median test error (in percentages) for Example \ref{ex::nonadditive}. Standard errors are in the parentheses.\label{tb::simu-error-ex5}}
\begin{center}
\begin{tabular}{l|rrrrrrrr}
\hline
Ex&FANS&FANS2&ROAD&PLR&penGAM&NB&FAIR&SVM\\
\hline
5&6.7(1.1)&6.9(1.1)&50.2(2.1)&50.0(1.4)&8.0(2.3)&50.2(2.3)&49.7(2.1)&50.0(2.0)\\
\hline
\end{tabular}
\end{center}
\end{table}

One problem in applications we are faced with is whether we should use FANS or FANS2.  While we do not have a universal rule, a rule of thumb might shed some insight. From the simulation examples, we see when the sample size is small and/or decision boundary is highly nonlinear, FANS is recommended over FANS2.  Otherwise, FANS2 is recommended. Admittedly, in real data applications, it is often impossible to know a priori how the oracle decision boundary looks like.  Data abundance can be a rough guideline in these scenarios.

A few extensions are worth further investigation. For example, an extension to multi-class classification is an interesting future work. Beyond a specific procedure, FANS establishes  a general two-step classification framework. For the first step, one can use other types of marginal density estimators, e.g., local polynomial density estimates. For the second step,  one might rely on other  classification algorithms, e.g., the support vector machine, $k$-nearest neighbors, etc. Searching for the best two-step combination is an important but difficult task, and we believe that the answer mainly depends on the specific applications.

We can further  augment the features by adding pairwise bivariate density ratios.  These bivariate densities can be approximated by the bivariate kernel density estimates.  Alternatively, we can restrict our attention to bivariate ratios of features selected by FANS.  The latter has significantly fewer features. 

Dimensions of data sets (e.g., SNPs) in many contemporary applications could be in millions. In such ultra-high dimensional scenarios, directly applying the FANS (FANS2) approach could cause problems due to high computational complexity and instability of the estimation.  It will be beneficial to have a prior step to reduce the dimensionality in the original data.  Notable works towards this effort on the theoretical front include \cite{Fan.Lv.2008}, which introduced the sure independence screening (SIS) property to screen out the marginally unimportant variables. Subsequently, \cite{Fan.Feng.ea.2011a} proposed nonparametric independence screening (NIS), an extension of SIS to the additive models.

\section{Appendix}
The appendix contains technical proofs and Lemma \ref{lejj1a}.

\begin{proof}[Proof of Lemma 2]
For any $r, m>0$,
%
\begin{eqnarray*}
&&P^{+-} \left(\max_{1\leq j\leq p}\| \hat{f}_{j} - f_{0, j}\|_{\infty}\geq m \right)\\
&\leq& e^{-rm} E^{+-} \exp\left(\max_{1\leq j\leq p}r\|\hat{f}_{j} - f_{0,j}\|_{\infty}\right)\\
&=& e^{-rm}E^{+-} \left( \max_{1\leq j\leq p}\exp r  \|\hat{f}_{j} - f_{0,j}\|_{\infty}    \right)\\
&\leq& e^{-rm} \sum_{j=1}^p E^{+-}\left( \exp r \|\hat{f}_{j} - f_{0,j}\|_{\infty}   \right)\,.
\end{eqnarray*}

Since we assumed that all $\hat{f}_j$ and  $f_{0,j}$ are uniformly bounded by $\epsilon_l^{-1}$, $\|\hat{f}_{j}-f_{0,j}\|_{\infty}$ is bounded by $\epsilon_l^{-1}$ for all $j\in\{1, \cdots, p\}$.  This coupled with  Lemma 1 in \cite{Tong.2012},  provides a high probability bound for $\|\hat{f}_{j}-f_{0,j}\|_{\infty}$, gives rise to the following inequality,

$$
E^{+-}\exp\left( r \|\hat{f}_{j} - f_{0,j}\|_{\infty}\right)
\leq \exp\left(r \sqrt{\frac{\log(n_1/\delta_2)}{n_1h}} \right) + \exp(r\epsilon_{l}^{-1})\cdot\delta_2\,,
$$
where $\delta_2$ plays the role of $\epsilon$ in Lemma 1 of \cite{Tong.2012}(taking constant $C=1$ for simplicity).

Finding the optimal order for $r$ does not seem to be feasible.  So we plug in $r = n_1^{1-\alpha}m$ and $\delta_2 = \exp(-r\epsilon_l^{-1})$, then
\begin{eqnarray*}
&&P^{+-} \left(\max_{1\leq j\leq p}\|\hat{f}_{j} - f_{0,j}\|_{\infty}\geq m \right)\\
&\leq& p\exp(-n_1^{1-\alpha}m^2)\left\{1+\exp\left( n_1^{1-\alpha}m\sqrt{\frac{\log n_1}{n_1h} + \frac{n_1^{1-\alpha}m\epsilon_l^{-1}}{n_1h}}\right)   \right\}\\
&\leq&  p\exp(-n_1^{1-\alpha}m^2)\left\{1+\exp\left[\sqrt{2}n_1^{1-\alpha}m\left(\sqrt{\frac{\log n_1}{n_1h}} +\sqrt{\frac{m\epsilon_{l}^{-1}}{n_1^{\alpha}h}}    \right)     \right]    \right\}\\
&\leq& p\exp(-n_1^{1-\alpha}m^2)\left\{ 1+\exp\left[\sqrt{2}n_1^{1-\alpha}m\left(\frac{\log n_1}{n_1}\right)^{\frac{2}{5}} +\sqrt{2}m^{\frac{3}{2}}\epsilon_l^{-\frac{1}{2}}n_1^{\frac{11}{10}-\frac{3}{2}\alpha}(\log n_1)^{\frac{1}{10}}  \right]  \right\}\,,
\end{eqnarray*}
where in the last inequality we have used the bandwidth $h = \left(\frac{\log n_1}{n_1}\right)^{1/5} $.

The results are derived by taking $m = \sqrt{\frac{2\log (3p/\delta_1)}{n_1^{1-\alpha}}}$ ( so $\delta_1 = 3p \exp(-n_1^{1-\alpha}m^2)$), and by taking
Assumption \ref{assumption:vanishing}.  Note that we need to introduce $\alpha>0$ because the consistency conditions do not hold for $\alpha = 0$. In fact, we need at least $\alpha > 7/15$.
Under this assumption, there exists a positive integer $N_1^*$ such that if $n_1\geq N_1^*$,
$$
1+ \exp\left[2^{\frac{5}{4}}\epsilon_{l}^{-\frac{1}{2}}n_1^{\frac{7}{20}-\frac{3}{4}\alpha}(\log(3p/\delta_1))^{\frac{3}{4}}(\log n_1)^\frac{1}{10} + 2 n_1^{\frac{1}{10}-\alpha}(\log(3p/\delta_1))^{\frac{1}{2}}(\log n_1)^{\frac{2}{5}}\right]    \leq 3\,.
$$
Therefore, for $n_1\geq N_1^*$,
$$
P^{+-} \left(\max_{1\leq j\leq p}\|\hat{f}_{j} - f_{0, j}\|_{\infty}\geq m \right)\leq \delta_1, \text{ for } m = \sqrt{\frac{2\log (3p/\delta_1)}{n_1^{1-\alpha}}}\,.
$$
\end{proof}



\begin{lemma}\label{lejj1a}
For any vector $\btheta_0=(\theta_{0,1},\ldots,\theta_{0,p})^T$, let
$S_{\btheta_0}=\{j:\, \theta_{0,j}\neq 0\}$,
and let the minimum signal level be $\bar{\btheta}_0=\min\{|\theta_{0,j}|:\, j\in S_{\btheta_0}\}$.
Let
$g(\theta_j)=c_j(\theta_j-\theta_{0,j})^2+\lambda\|\theta_{j}\|_0,$
where $c_j>0$.
 If
$\lambda\le c_j\bar{\btheta}_0^2$,
$g(\theta_j)$ achieves the unique minimum at $\theta_j=\theta_{0,j}$.
\end{lemma}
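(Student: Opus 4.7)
The plan is to reduce the minimization of $g(\theta_j)$ to a two-candidate comparison, exploiting the fact that the $L_0$ penalty on a scalar $\theta_j$ takes only two values: $0$ when $\theta_j=0$ and $\lambda$ when $\theta_j\neq 0$. Since the quadratic part $c_j(\theta_j-\theta_{0,j})^2$ is strictly convex with its unique global minimum at $\theta_j=\theta_{0,j}$ (value $0$), the best candidate among all nonzero $\theta_j$'s (when $\theta_{0,j}\neq 0$) is $\theta_j=\theta_{0,j}$ itself, giving $g(\theta_{0,j})=\lambda$. The only other serious candidate is $\theta_j=0$. So the argmin is determined by a single numerical comparison between $g(\theta_{0,j})$ and $g(0)$.

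I would then split into two cases according to whether $j\in S_{\btheta_0}$. If $j\notin S_{\btheta_0}$, so $\theta_{0,j}=0$, then $g(0)=0$ while any $\theta_j\ne 0$ yields $g(\theta_j)=c_j\theta_j^2+\lambda>0$; hence $\theta_j=\theta_{0,j}=0$ is the unique minimizer. If $j\in S_{\btheta_0}$, then $|\theta_{0,j}|\geq \bar{\btheta}_0$, and
$$
g(0)=c_j\theta_{0,j}^2 \ \geq\ c_j\bar{\btheta}_0^2 \ \geq\ \lambda \ =\ g(\theta_{0,j}),
$$
while any nonzero $\theta_j\ne\theta_{0,j}$ gives $g(\theta_j)=c_j(\theta_j-\theta_{0,j})^2+\lambda>\lambda$ by strict convexity of the quadratic away from its vertex. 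So $\theta_j=\theta_{0,j}$ attains the minimum of $g$.

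The only subtle point — and the one part that is not entirely mechanical — is the \emph{uniqueness} at the boundary configuration $\lambda=c_j\bar{\btheta}_0^2$ with $|\theta_{0,j}|=\bar{\btheta}_0$, where both $\theta_j=\theta_{0,j}$ and $\theta_j=0$ attain the common value $\lambda$. Strict uniqueness therefore requires the \emph{strict} inequality $\lambda<c_j\bar{\btheta}_0^2$ (or else an explicit tie-breaking convention that selects the nonzero value $\theta_{0,j}$). I would flag this small gap between the stated hypothesis and the uniqueness conclusion; with the strict version of the hypothesis the argument above gives strict inequalities in both cases, and the lemma follows immediately with no further obstacle.
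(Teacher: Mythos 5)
Your proof is correct and follows essentially the same two-case comparison as the paper's own argument: for $\theta_{0,j}=0$ the claim is immediate, and for $\theta_{0,j}\neq 0$ one compares $g(0)=c_j\theta_{0,j}^2\geq c_j\bar{\btheta}_0^2\geq\lambda$ against $g(\theta_{0,j})=\lambda$, with every other nonzero $\theta_j$ strictly worse. Your flag about uniqueness failing at the boundary configuration $\lambda=c_j\bar{\btheta}_0^2$ with $|\theta_{0,j}|=\bar{\btheta}_0$ is well taken; the paper's proof likewise only shows $g(\theta_j)\geq g(\theta_{0,j})$ and hence establishes that $\theta_{0,j}$ is \emph{a} minimizer rather than the unique one, so a strict inequality in the hypothesis (or dropping the word ``unique'') is indeed needed for the statement to be literally correct.
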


\begin{proof}[Proof of Lemma \ref{lejj1a}]

For $\theta_{0,j}=0$, the result is obvious.
For $\theta_{0,j}\neq 0$, we have $j\in S_{\btheta_0}$ and
\begin{eqnarray*}
g(\theta_j)
&\ge& \lambda \|\theta_j\|_0 I(\theta_j\ne 0) +c_j (\theta_j-\theta_{0,j})^2 I(\theta_j=0)\\
&=& \lambda \|\theta_j\|_0 I(\theta_j\ne 0) +c_j \theta_{0,j}^2 I(\theta_j=0).
\end{eqnarray*}
If $\lambda\|\theta_{0,j}\|_0\le c_j\bar{\btheta}_0^2$,
\begin{eqnarray*}
g(\theta_j)
\ge \lambda \|\theta_j\|_0 I(\theta_j\ne 0)+\lambda\|\theta_{0,j}\|_0 I(\theta_j=0)
=\lambda\|\theta_{0,j}\|_0.
\end{eqnarray*}
Since $g(\theta_{0,j})=\lambda\|\theta_{0,j}\|_0$,
the lemma follows.

\end{proof}

\begin{proof}[Proof of Lemma \ref{lejj3}]


Denote
$Q_0(\bbeta)=P\rho(m_{\bbeta},\hat\bbf,\hat\bg)$. Then we have
$\bbeta_1=\arg\min_{\bbeta\in\mbR^p}Q_0(\bbeta).$ Since  $\nabla Q_0(\bbeta_1)=0$ and
$$\nabla^2Q_0(\bbeta)=P\{\hat{\bZ}_1\hat{\bZ}_1^T\exp(\hat{\bZ}_1^T\bbeta)(1+\exp(\hat{\bZ}_1^T\bbeta))^{-2}\}
\geq \epsilon_0(1-\epsilon_0) P\{\hat{\bZ}_1\hat{\bZ}_1^T\}\succeq {\mathbf 0}.$$
By Taylor's expansion of $Q_0(\bbeta)$ at $\bbeta_1$,
\begin{eqnarray}\label{eqjj8a}
Q(\bbeta)=Q_0(\bbeta_1)
+0.5(\bbeta-\bbeta_1)^T\nabla^2 Q_0(\tilde{\bbeta})(\bbeta-\bbeta_1)+\lambda \|\bbeta\|_0\, ,
\end{eqnarray}
where
$\tilde{\bbeta}$ lies between $\bbeta$ and $\bbeta_1$.
Let $\widehat{\bM}=P\{\hat{\bZ}_1(\bbeta_1)\hat{\bZ}_1(\bbeta_1)^T\}$, where $\hat \bZ_1(\bbeta_1)$ is the subvector of $\hat \bZ_1$ corresponding to the nonzero components of $\bbeta_1$,
and
$\bM=P\{\bZ_1^0(\bbeta_1)\bZ_1^0(\bbeta_1)^T\}$,
where $\bZ_1^0(\bbeta_1)$ is the subvector of $\bZ_1^0$ corresponding to the nonzero components of $\bbeta_1$.
Let $\bold{F}=\widehat{\bM}-\bM$ (a symmetric matrix). From the uniform deviance result of Lemma 3,  with probability $1-\delta$ regarding the labeled samples,  there exists a constant $C_4>0$ such that $|F_{kl}|\leq C_4 b_{n_1}$ uniformly for $k,l=1,\cdots,s_{\bbeta_1}$, where $b_{n_1} = {2\log (3p/ \delta)}/{n_1^{1-\alpha}}$.

Hence, $\|\bold{F}\|_2\leq \|\bold{F}\|_F\leq C_4 s_{\bbeta_1}b_{n_1}\leq C_4 a_{n_1}b_{n_1}$.
For any eigenvalue $\lambda(\widehat{\bM})$,
by the Bauer-Fike inequality (Bhatia, 1997), we have $\min_{1\le k\le s_{\bbeta_1}}|\lambda(\widehat{\bM})-\lambda_k(\bM)|\le \|\bold{F}\|_2\leq C_4 a_{n_1}b_{n_1},$
where $\lambda_k(\bA)$ denotes the $k$-th largest eigenvalue of $\bA$.
In addition, in view of  Assumption \ref{assum:beta1-sparsity}, there exists  $k\in S_{\bbeta_1}$ such that
$$\lambda_{\min}(\widehat{\bM})\geq \lambda_k(\bM)- C_4 a_{n_1}b_{n_1}\ge \lambda_{\min}(\bM)- C_4 a_{n_1}b_{n_1} \ge C_5 - C_4 a_{n_1}b_{n_1}.$$
Since $a_{n_1}b_{n_1}=o(1)$, there exists $N_3^*(\delta)$  such that when $n_1>N_3^*(\delta)$, we have $\lambda_{\min}(\widehat{\bM})>0$.

 Let $\bbeta_1^{(1)}$ be the subvector of $\bbeta_1$ consisting of the nonzero components.
Then by (\ref{eqjj8a}) and  Lemma~\ref{lejj1a} for each $j\in S_{\bbeta_1}$ with $\lambda < 0.5C_5\epsilon_0(1-\epsilon_0)\bar{\bbeta_1}^2$, we have
\begin{eqnarray}\label{eqjj9a}
Q(\bbeta)
&\ge& Q_0(\bbeta_1)
+0.5(C_5-C_4a_{n_1}b_{n_1})\epsilon_0(1-\epsilon_0)\|\bbeta^{(1)}-\bbeta_1^{(1)}\|^2+\lambda \|\bbeta\|_0\nonumber\\
&\ge& Q_0(\bbeta_1)+\sum_{j\in S_{\bbeta_1}}\bigl\{0.5(C_5-C_4a_{n_1}b_{n_1})\epsilon_0(1-\epsilon_0)(\beta_j-\beta_{1,j})^2+\lambda\|\beta_{j}\|_0\bigr\},
\end{eqnarray}
where
$\beta_j$ and $\beta_{1,j}$ are the $j$-th components of $\bbeta$ and $\bbeta_1$, respectively.
 For $n_1\geq N^*_{3}(\delta)$,
 \begin{eqnarray*}
Q(\bbeta)\ge Q_0(\bbeta_1)+\lambda\sum_{j\in S_{\bbeta_1}}\|\beta_{1,j}\|_0
=Q_0(\bbeta_1)+\lambda\|\bbeta_1\|_0\,.
\end{eqnarray*}By (\ref{eqjj8a}), we have
$$Q(\bbeta_1)=Q_0(\bbeta_1)+\lambda\|\bbeta_1\|_0\,.$$
Therefore, $\bbeta_1$ is a local minimizer of $Q(\bbeta)$. It then follows from
the convexity of $Q(\bbeta)$ that $\bbeta_1$ is the global minimizer $\bbeta_1^*$ of $Q(\bbeta)$.
\end{proof}

\begin{proof}[Proof of Theorem \ref{thm::oracle-final}]
For simplicity, denote by $\rho(m(\bZ_1),Y_1)$  the loss function
$\rho_q(\bX_1,Y_1)=-Y_1 m(\bZ_1)+\log(1+\exp(m(\bZ_1))$.
Note that
$$\frac{\partial \rho(m(\bZ_1),Y_1)}{\partial m(\bZ_1)}=-Y_1+\frac{\exp(m(\bZ_1))}{1+\exp(m(\bZ_1))}=-Y_1+\pi_{m, \bbf_0,\bbg_0}(\bX_1),$$
and
\begin{eqnarray*}
\frac{\partial^2 \rho(m(\bZ_1),Y_1)}{[\partial m(\bZ_1)]^2}
=\frac{\exp(m(\bZ_1))}{[1+\exp(m(\bZ_1))]^2}.
\end{eqnarray*}
By the second order Taylor expansion, we obtain that
\begin{eqnarray}
\rho(m_{\bbeta}(\hat{\bZ}_1),Y_1)
&=&\rho(m_{\bbeta_0}(\bZ^0_1),Y_1)+[\partial\rho(m_{\bbeta_0}(\bZ^0_1),Y_1)/\partial m_{\bbeta}(\bZ_1)] (m_{\bbeta}(\hat{\bZ}_1)-m_{\bbeta_0}(\bZ^0_1))\nonumber\\
&&+\frac{1}{2}\frac{\partial^2\rho(m^*,Y_1)}{[\partial m_{\bbeta}(\bZ_1)]^2} (m_{\bbeta}(\hat{\bZ}_1)-m_{\bbeta_0}(\bZ_1))^2,  \label{eqjj1}
\end{eqnarray}
where $m^*$ lies between $m_{\bbeta}(\hat{\bZ}_1)$ and $m_{\bbeta_0}(\bZ^0_1)$.
Since
\begin{equation}\label{eqjj2}
P\left[\frac{\partial\rho(m_{\bbeta_0}(\bZ^0_1),Y_1)}{\partial m_{\bbeta}(\bZ_1)}\right]=0
\end{equation}
and $0<\partial^2 \rho(m^*,Y_1)/[\partial m_{\bbeta}(\bZ_1)]^2<1,$
taking the expectation we obtain that
\begin{eqnarray*}
|P\rho(m_{\bbeta}(\hat{\bZ}_1),Y_1)
-P\rho(m_{\bbeta_0}(\bZ^0_1),Y_1)|
&<& 0.5 P[(m_{\bbeta}(\hat{\bZ}_1)-m_{\bbeta_0}(\bZ^0_1))^2]\\
&=& 0.5P[(\hat{\bZ}_1^T\bbeta-(\bZ^0_1)^T\bbeta_0)^2].
\end{eqnarray*}
Hence, from Corollary 1, on the event $\mJ_3$,
\begin{eqnarray*}
|P\rho(m_{\bbeta_0}(\hat{\bZ}_1),Y_1)
-P\rho(m_{\bbeta_0}(\bZ^0_1),Y_1)|
&\leq& 0.5\bbeta_0^TP[(\hat{\bZ}_1-\bZ^0_1)(\hat{\bZ}_1-\bZ^0_1)^T]\bbeta_0\\
&\leq& C_1 C_4s_{\bbeta_0}^2b_{n_1},
\end{eqnarray*}
where $s_{\bbeta}=|S_{\bbeta}|$ is the cardinality of
$S_{\bbeta}=\{j:\, \beta_j\neq 0\}$.
Naturally,
$P\rho(m_{\bbeta_0}(\hat{\bZ}_1),Y_1)
\leq P\rho(m_{\bbeta_0}(\bZ^0_1),Y_1)+C_1C_4s_{\bbeta_0}^2b_{n_1}$.

In addition,  by definition of $\bbeta_1$, $P\rho(m_{\bbeta_1}(\hat{\bZ}_1),Y_1)=\min_{\bbeta} P\rho(m_{\bbeta}(\hat{\bZ}_1),Y_1)$. As a result, $P\rho(m_{\bbeta_1}(\hat{\bZ}_1),Y_1)\le P\rho(m_{\bbeta_0}(\hat{\bZ}_1),Y_1).$
Thus, we have
\begin{equation}\label{eqjj3}
P\rho(m_{\bbeta_1}(\hat{\bZ}_1),Y_1)\le
P\rho(m_{\bbeta_0}(\bZ^0_1),Y_1)+C_1C_4s_{\bbeta_0}^2b_{n_1}.
\end{equation}

In addition, by (\ref{eqjj1}) and (\ref{eqjj2}),
for any $\bbeta$ we have
$P\rho(m_{\bbeta}(\hat{\bZ}_1),Y_1)\ge P\rho(m_{\bbeta_0}(\bZ^0_1),Y_1)$. Then, setting  $\bbeta=\bbeta_1$ on the left side leads to
\begin{equation}\label{eqjj4}
P\rho(m_{\bbeta_1}(\hat{\bZ}_1),Y_1)\ge P\rho(m_{\bbeta_0}(\bZ^0_1),Y_1).
\end{equation}
Combining (\ref{eqjj3}) and (\ref{eqjj4}) leads to
\begin{equation}\label{eqjj5}
|P\rho(m_{\bbeta_1}(\hat{\bZ}_1),Y_1)-
P\rho(m_{\bbeta_0}(\bZ^0_1),Y_1)|\leq C_1C_4s_{\bbeta_0}^2b_{n_1}.
\end{equation}
As a result, we have
\begin{align}\label{eq::excess-risk-beta1-fhat}
\mathcal{E}(m_{\bbeta_1},\hat\bbf,\hat\bg)\leq
C_1C_4s_{\bbeta_0}^2b_{n_1}.
\end{align}
\eqref{eq::excess-risk-beta1-fhat} combined with Lemma~\ref{lejj3} ($\bbeta_1^*=\bbeta_1$) leads to
\begin{equation}\label{eqjj6}
\mathcal{E}(m_{\bbeta_1^*},\hat\bbf,\hat\bg)\leq C_1C_4s_{\bbeta_0}^2b_{n_1}.
\end{equation}
Recall the oracle estimator
$$\bbeta_1^*=\arg\min_{\bbeta\in \mathcal{B}}
\Bigl\{\mathcal{E}(m_{\bbeta},\hat\bbf,\hat\bg)+\frac{8\lambda s_{\beta}}{c\phi_1^2(S_{\beta})}\Bigr\}.$$
Then by Theorem~1,
\begin{equation} \label{eqjj7}
\mathcal{E}(\hat{q}_1)
=\mathcal{E}(m_{\hat{\bbeta}_1}, \hat\bbf,\hat\bg)
\le 6 \mathcal{E}(m_{\bbeta_1^*}, \hat\bbf,\hat\bg)
+\frac{16\lambda^2{s_{\beta_1^*}}(e^{\eta}/\epsilon_0+1)^2}{c\phi^2(S_{\beta_1^*})}.
\end{equation}
Therefore, by (\ref{eqjj6}) and (\ref{eqjj7}),
\begin{eqnarray*}
\mathcal{E}(\hat{q}_1)
&\le& \frac{16\lambda^2{s_{\beta_1^*}}(e^{\eta}/\epsilon_0+1)^2}{c\phi^2(S_{\beta_1^*})}+C_1C_4s_{\bbeta_0}^2b_{n_1}.
\end{eqnarray*}

\end{proof}

\bibliographystyle{ims}
\bibliography{FANS-ref}

\end{document}